\documentclass[11pt]{article}
\usepackage{graphicx} 
\usepackage[utf8]{inputenc}
\usepackage{eurosym,geometry,graphicx,color,setspace,sectsty,comment,footmisc,natbib,pdflscape,array,hyperref}
\usepackage[normalem]{ulem}
\usepackage{caption}
\usepackage{subcaption}
\usepackage{amssymb,amsfonts,amsmath}
\usepackage{amsthm}
\usepackage{natbib}
\usepackage{booktabs}
\usepackage{threeparttable}
\usepackage{tikzsymbols} 
\usepackage{tikz}
\usetikzlibrary{patterns,arrows}
\usepackage{apxproof}
\usepackage{siunitx}
\usepackage[capposition=top]{floatrow} 

\usepackage{xcolor} 
\usepackage{todonotes}
\setuptodonotes{inline}
\usepackage[]{mdframed}


\newtheorem{lemma}{Lemma}

\newtheorem{proposition}{Proposition}

\theoremstyle{definition} 

\newtheorem{remark}{Remark}

\newtheorem*{example*}{Example}

\newtheorem*{fact*}{Fact}

\AtBeginEnvironment{example}{%
  \pushQED{\qed}%
}
\AtEndEnvironment{example}{\popQED\endexample}

\geometry{left=1.0in,right=1.0in,top=1.0in,bottom=1.0in} 
\hypersetup{colorlinks,citecolor=blue} 
\doublespacing 

\title{Concentration-Based Inference for Evaluating Horizontal Mergers\footnote{I thank Dan Greenfield, Devesh Raval, Ted Rosenbaum, David Schmidt, Charles Taragin, Shawn Ulrick, and Brett Wendling for their helpful comments. Much of this article was written during the author’s tenure at the U.S. Federal Trade Commission; the views expressed in this paper are those of the author and do not necessarily represent those of the Federal Trade Commission or any of its Commissioners. All errors are mine.}}
\author{Paul S. Koh\footnote{School of Economics, Yonsei University. Seoul, Republic of Korea. Email: \texttt{paulkoh9@gmail.com}.}}
\date{September 20, 2025}


\begin{document}

\maketitle
\begin{abstract}
Antitrust authorities routinely rely on market concentration measures to assess the potential adverse effects of mergers on consumer welfare. Using a first-order approximation argument with logit and CES demand, I derive the relationship between the welfare effect of a merger on consumer surplus and the change in the Herfindahl-Hirschman Index (HHI). My results suggest that merger harm is correlated with the merger-induced change in HHI, and the proportionality coefficient depends on the price responsiveness parameter, market size, and the distribution of market shares within and across the merging firms. I present numerical validation of my formula along with an empirical illustration.

    \vspace{1em}
    \noindent \textbf{Keywords}: Merger, Herfindahl-Hirschman index, consumer surplus, upward pricing pressure \\
    \noindent \textbf{JEL Codes}: D43, L13, L41, L44
\end{abstract}

\clearpage
\section{Introduction} 
Concentration measures have played a critical role during merger enforcement at the screening stage and in court proceedings. In the United States, \emph{Merger Guidelines}, jointly drafted by the Department of Justice and the Federal Trade Commission, state how the agency uses the pre-merger level of Herfindahl-Hirschman index (HHI) and merger-induced changes in HHI (often denoted $\Delta$HHI) to presume anticompetitive mergers. 

Previous works in the merger analysis literature have recognized the positive correlation between the changes in HHI and unilateral effects in both homogeneous and differentiated products settings \citep*{kim1993mergers, froeb1998robust, prager1998substantial, shapiro20102010, ashenfelter2015efficiencies, miller2017upward, miller2021quantitative, spiegel2021herfindahl, bhattacharya2023merger}.\footnote{In particular, \citet{shapiro20102010} shows that, with logit demand, the sum of the approximate diversion ratios is $s_j  +s_k + 2s_js_k$ or $s_j + s_k + \Delta HHI$. My derivation is different from his. I also focus on the welfare effects of mergers.} More recently, \citet{nocke2022concentration} has shown that compensating synergies that offset market power effects of mergers can be expressed in terms of $\Delta$HHI in both the Cournot and Bertrand cases. \citet{nocke2023aggregative} shows how $\Delta$HHI relates to merger welfare effects in differentiated products settings. 

This paper contributes to the broad literature on concentration measures and market power by providing a novel characterization of how HHI measures relate to merger welfare effects in multiproduct oligopoly settings with logit and CES demand. Using the first-order approach arguments \`{a} la \citet{jaffe2013first}, I show 
\begin{equation}\label{equation:what.I.show}
    \Delta \mathit{CS} \approx - \rho \Delta \mathit{HHI},
\end{equation}
where $\Delta \mathit{CS}$ is the change in consumer surplus, $\Delta \mathit{HHI}$ is the change in HHI computed with pre-merger market shares (in quantity if the demand is logit and in revenue if CES), and $\rho$ is the proportionality coefficient whose value depends on the market size, price responsiveness parameter, and the distribution of market shares across and within the merging firms.\footnote{The first-order approach to merger analysis has gained significant popularity due to its tractability; see, e.g., \citet{farrell2010antitrust}, \citet{valletti2021mergers}, and \citet{koh2024merger}.} Equation \eqref{equation:what.I.show} provides a simple but novel angle for understanding how mergers with the same value of $\Delta$HHI may differ in their impact on consumer welfare based on prior knowledge of industry statistics (i.e., market size, price elasticity, and the distribution of firm sizes). 

Formula \eqref{equation:what.I.show} provides a simple tool for connecting $\Delta$HHI to predicted consumer harm at the merger screening stage. I show how the proportionality coefficient $\rho$ can be derived directly from the merging firms' product-level shares and the price responsiveness parameter, including how to compute the merger pass-through matrix under both logit and CES demand systems. I build on \citet{miller2017upward}'s insight that when estimating merger pass-through rates is difficult, it is reasonable to approximate the merger pass-through matrix $M$ with an identity matrix. Unlike \citet{miller2017upward}, however, I show that the approximation is theoretically valid and can be refined when the demand is CES. Specifically, I prove that $M \to I$ under logit demand and $M \to \frac{\sigma}{\sigma - 1}I$ under CES demand as the merging firms' market shares approach zero, where $\sigma$ is the elasticity of substitution. These results provide new theoretical support for the approximation by \citet{miller2017upward}, and I present numerical evidence showing its accuracy and practical usefulness.

The closest work to this paper is \citet{nocke2023aggregative}, which also shows \eqref{equation:what.I.show} holds using second-order Taylor expansions around small shares and monopolistic competition conduct. However, their version of $\rho$ depends only on the market size and price responsiveness parameter. My first-order approach-based derivation, which relies on a different Taylor approximation employed by the first-order approach literature, shows how $\rho$ also depends on the \emph{distribution} of merging firms' shares.\footnote{To be clear, \citet{jaffe2013first}'s first-order Taylor approximation argument relies on a small upward pricing pressure assumption but not a small market share assumption.} Monte Carlo simulation shows that my approach can improve the prediction power relative to \citet{nocke2023aggregative}'s approach. My results corroborate
the findings of \citet{nocke2022concentration} and \citet{nocke2023aggregative} that the unilateral effects of horizontal mergers are directly linked to the change in HHI but also demonstrate that pre-merger market shares may still retain relevance even after conditioning on the change in HHI.

The rest of the paper is organized as follows. Section \ref{section:model} describes the model and reviews the first-order approach to merger analysis. Section \ref{section:consumer.surplus.and.hhi} presents the main proposition that characterizes $\rho$ in \eqref{equation:what.I.show} as a simple function of market size, price responsiveness parameter, and the distribution of shares within and across firms. Section \ref{section:analysis.of.the.proportionality.coefficient} analyzes the behavior of the proportionality coefficient $\rho$ with respect to model primitives and merging firms' market shares. Section \ref{section:comparison.to.Nocke.Shutz} compares my results to those of \citet{nocke2023aggregative}. Section \ref{section:empirical.example} considers an empirical application to the Heinz/Beech-Nut merger to demonstrate the usefulness of my framework. Section \ref{section:conclusion} concludes. All proofs are in Appendix \ref{section:proofs}.
\section{Model\label{section:model}}

\subsection{Setup}
I consider a Bertrand-Nash pricing game of multiproduct firms with differentiated products.\footnote{Here, ``multiproduct'' means that a firm may offer multiple product varieties rather than conglomerate-type products and sales.} Each firm $f$'s profit function is given by
\begin{equation}
    \Pi_f(p) = \sum_{l \in \mathcal{J}_f} (p_l - c_l) q_l(p),
\end{equation}
where $p_l$, $c_l$, and $q_l(p)$ represent product $l$'s price, (constant) marginal cost, and quantity demanded given price vector $p$, respectively; $\mathcal{J}_f$ is the set of products produced by firm $f$.

Following \citet{nocke2018multiproduct}, I consider multinomial logit (MNL) and constant elasticity of substitution (CES) demand functions, under which the demand for product $j$ is given by 
\begin{equation}\label{equation:demand.functions}
q_j(p) = 
\begin{cases}
    \frac{\exp(v_j - \alpha p_j)}{1 + \sum_{l \in \mathcal{J}} \exp(v_l - \alpha p_l)} N & \text{(MNL)} \\
    \frac{v_j p_j^{-\sigma}}{1 + \sum_{l \in \mathcal{J}} v_l p_l^{1-\sigma}} Y & \text{(CES)}
\end{cases} 
,
\end{equation}
where $N$ and $\alpha>0$ (resp. $Y$ and $\sigma>1$) are the market size and price responsiveness parameters in the MNL (resp. CES) model, respectively. In typical applications, $N$ and $Y$ represent the total number of purchase opportunities (e.g., annual units sold) and the total category expenditure (e.g., annual sales) in the relevant product/geographic market.\footnote{I treat market size as exogenous. While the hypothetical monopolist test (HMT) may play a role in delineating market boundaries and size, my approach can be more practically applied to candidate market definitions at the screening stage. See, for example, \citet{koh2024market} for a sensitivity analysis framework regarding market definition. } In each model, $v_j$ is the term that captures product $j$'s quality. Note that the denominators in \eqref{equation:demand.functions} also reflect the standard normalization of the value of the outside option.\footnote{I review \citet{nocke2018multiproduct}'s aggregative games framework in Online Appendix \ref{section:review.of.aggregative.games.framework} for readers less familiar with the framework. I also refer readers to \citet{caradonna2024mergers} Online Appendix A for an excellent review of the aggregative games framework.}

\subsection{First-Order Approximation of Merger Welfare Effects}
I approximate the impact of mergers on consumer surplus using the first-order approach of \citet{jaffe2013first}. This method utilizes pre-merger data to calculate margins and diversion ratios, translates them into predicted price changes using a merger pass-through matrix, and then infers the resulting effects on consumer welfare. Formally, the process unfolds as follows.

Suppose firms $A$ and $B$ merge. The upward pricing pressure (UPP) of product $j \in \mathcal{J}_A$ is defined as
\begin{equation}\label{equation:upward.pricing.pressure}
    \mathit{UPP}_j \equiv \sum_{l \in \mathcal{J}_B} (p_l - c_l) D_{j \to l},
\end{equation}
where $D_{j \to l} \equiv - (\partial q_l / \partial p_j) / (\partial q_j / \partial p_j)$ is the quantity diversion ratio from product $j$ to $l$ \citep{farrell2010antitrust}. The definition of upward pricing pressure is symmetric for products $j \in \mathcal{J}_B$. 

\citet{jaffe2013first} show that predicting merger-induced price changes from upward pricing pressures (net of efficiencies) requires a \emph{merger pass-through matrix}, which captures how these pressures spread across prices. They derive the first-order approximation $\Delta p \approx M \cdot \mathit{UPP}$, where $M$ is the merger pass-through matrix and $\mathit{UPP}$ is the vector of upward pricing pressures for the merging firms' products.\footnote{Formally, the merger pass-through matrix is defined as $M = - (\partial h(p^\text{pre}) / \partial p)^{-1}$, where $h(p^\text{post}) = 0$ characterizes the post-merger profit maximization conditions, with each $h_j$ normalized to be quasilinear in marginal cost of product $j$. As observed by \citet{koh2024merger}, in the case of CES demand, it is more convenient to work with the merger pass-through matrix defined as $M = - (\partial h(\tilde{p}^\text{pre}) /\partial \tilde{p})^{-1} $, where $\tilde{p}_j \equiv \log p_j$. This in turn yields $\%\Delta p \approx M \cdot \mathit{GUPPI}$, where $(\% \Delta p)_j = \Delta p_j/p_j$ is the percentage change in price of product $j$ and $\mathit{GUPPI}_j \equiv \mathit{UPP}_j / p_j$. }

Once the analyst obtains these UPP-based predicted price changes, the total impact on consumer surplus can be approximated by summing across products:
\[
\Delta \mathit{CS} \approx \sum_{j \in \mathcal{J}_A \cup \mathcal{J}_B} \Delta \mathit{CS}_j,
\]
where $\Delta \mathit{CS}_j \approx - \Delta p_j \times q_j$ for all $j \in \mathcal{J}_A \cup \mathcal{J}_B$. Thus, this framework connects post-merger price effects to the underlying diversion ratios and margins, which in turn reflect the merging firms’ pre-merger market shares.

\subsection{Small-Share Approximation of Merger Pass-Through Matrix}

The accuracy of the first-order approach depends on the availability of the merger pass-through matrix $M$, which is often difficult to estimate due to its reliance on demand curvature. While $M$ can be expressed using market shares and the price responsiveness parameter under logit or CES demand, the required matrix inversion makes closed-form solutions unwieldy.\footnote{See Online Appendix \ref{section:merger.pass.through.matrix.under.logit.and.ces} for a full characterization of the merger pass-through matrix under logit and CES demand.} The next lemma offers a key simplification, allowing the first-order approach to be applied without explicitly calculating $M$.

\begin{lemma} \label{lemma:small.share.approximation.of.merger.pass.through.matrix}
If the merging firms' market shares approach zero, the merger pass-through matrix converges to $M \to \varphi I$, where $\varphi = 1$ under MNL demand and $\varphi = \frac{\sigma}{\sigma - 1}$ under CES demand.
\end{lemma}
\begin{proof}
    See Appendix \ref{section:proof.of.lemma.1}.
\end{proof}

Lemma \ref{lemma:small.share.approximation.of.merger.pass.through.matrix} provides theoretical support for the approximation $M \approx I$ proposed by \citet{miller2017upward}, who use simulation evidence to show that the diagonal elements of the pass-through matrix generally dominate the off-diagonal terms.\footnote{\citet{miller2017upward}'s approximation $\Delta p_j \approx \mathit{UPP}_j$ has been proven helpful for practitioners because pass-through rates are usually difficult to estimate \citep{farrell2010antitrust, jaffe2013first, weyl2013pass, miller2016pass, miller2017upward, miller2017pass, koh2024merger}.} My results refine their approach by showing that under CES demand, this approximation should be scaled by a factor of $\frac{\sigma}{\sigma - 1} > 1$. This adjustment reflects that CES demand typically implies higher pass-through rates than logit demand, allowing the Miller et al. approximation to be tailored more accurately to the CES setting.

Lemma \ref{lemma:small.share.approximation.of.merger.pass.through.matrix} suggests that it is reasonable to approximate each merger price effect as $\Delta p_j \approx \varphi \mathit{UPP}_j$. To assess the performance of this approach, I repeat the Monte Carlo simulation exercise from \citet{miller2017upward} and evaluate its accuracy.\footnote{I explain the simulation details in Appendix \ref{section:simulation}. Note that \citet{miller2017upward} does not study the case of CES demand. The result for logit is consistent with \citet{miller2017pass}; see their Figure 2.} Figure \ref{figure:accuracy.of.upp} shows that $\varphi \mathit{UPP}_j$ closely predicts the actual merger price effects $\Delta p_j$ when the UPPs are small, which typically occurs when the merging firms have small market shares. Under logit demand, the approximation remains accurate across a wide range of UPP values. In contrast, under CES demand, it tends to understate the actual price effects as UPP increases, consistent with the findings of \citet{miller2017upward}.\footnote{In the case of CES demand, the predicted merger price effects may be conservative, likely because focusing only on first-order effects overlooks the strong iterative effects that arise under convex demand.} Overall, Figure \ref{figure:accuracy.of.upp} indicates that using $M \approx \varphi I$ offers a straightforward and practical way to apply the first-order approach without directly estimating the merger pass-through matrix.

\begin{figure}[htbp!]
\centering
\begin{subfigure}{.5\textwidth}
  \centering
  \includegraphics[width=.6\linewidth]{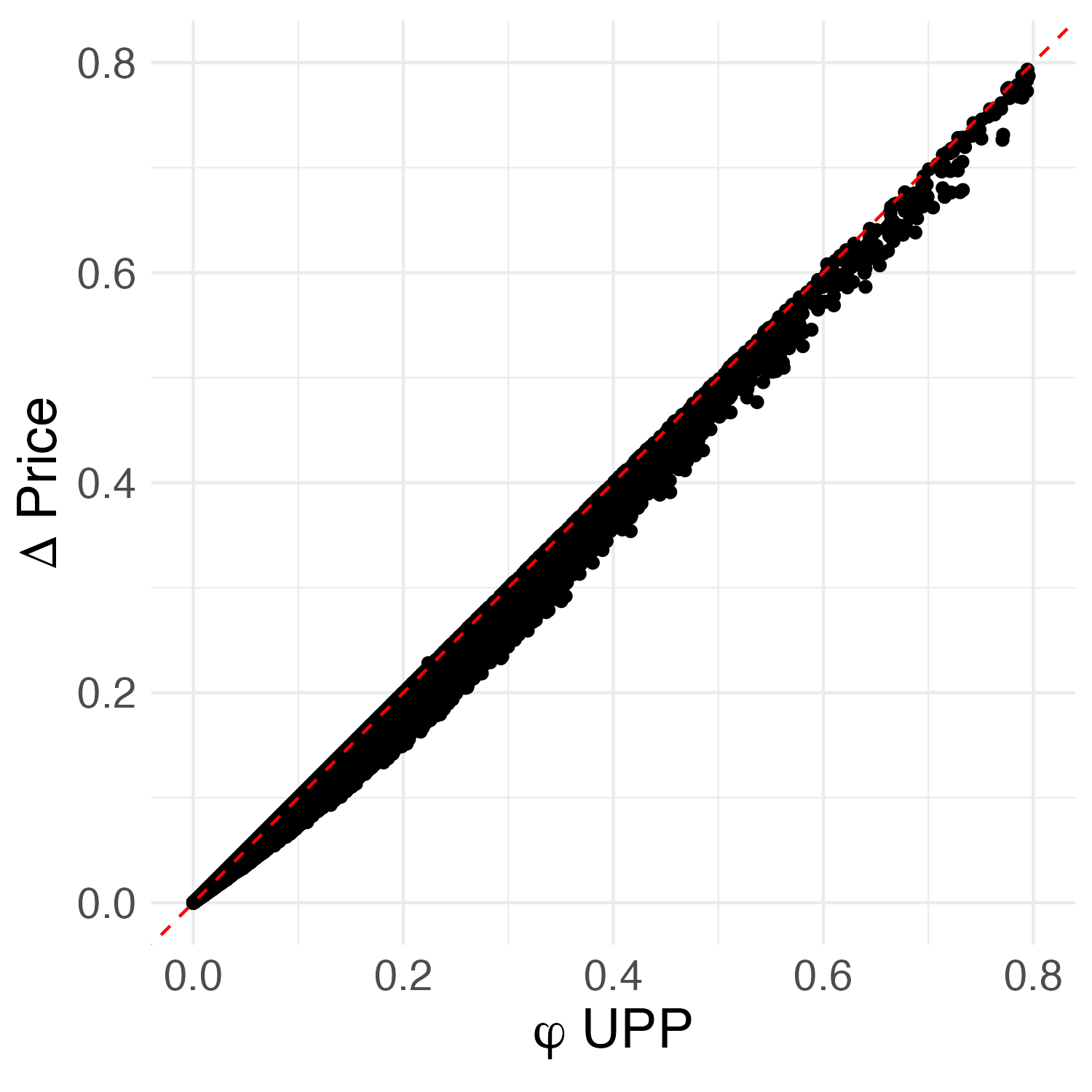}
  \caption{Logit ($\varphi = 1$)}
  \label{fig:sub1}
\end{subfigure}%
\begin{subfigure}{.5\textwidth}
  \centering
  \includegraphics[width=.6\linewidth]{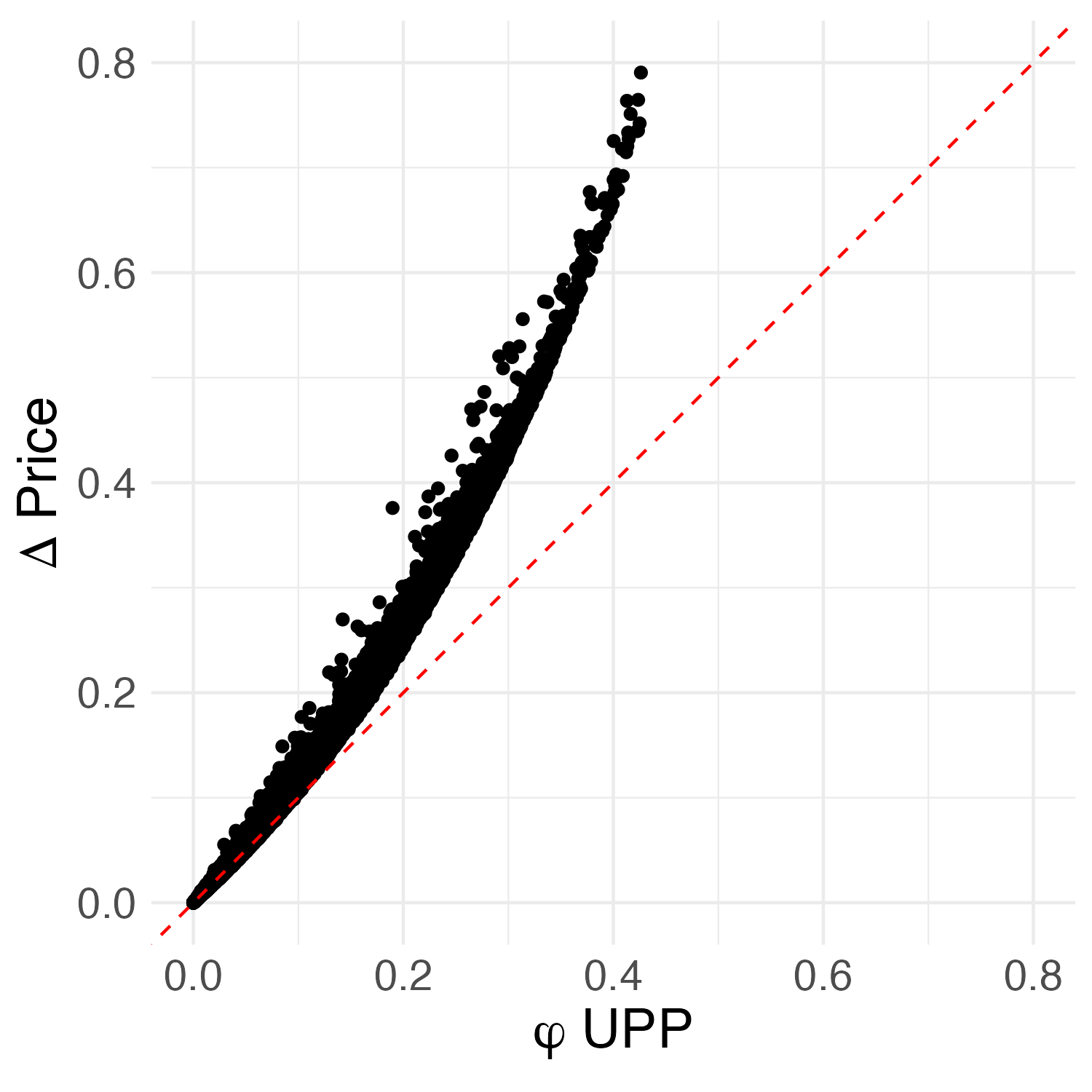}
  \caption{CES ($\varphi = \frac{\sigma}{\sigma - 1}$)}
  \label{fig:sub2}
\end{subfigure}
\caption{Accuracy of the $\Delta p_j \approx \varphi \mathit{UPP}_j$ Approximation}
\label{figure:accuracy.of.upp}
\floatfoot{\emph{Notes}: The figures compare the accuracy of the $\varphi \mathit{UPP}_j$ against the true price change $\Delta p_j$ simulated from mergers following the Monte Carlo simulation design of \citet{miller2017upward}.}
\end{figure}

\subsection{Limitations}

The first-order approach to merger analysis has important limitations. First, by construction, it only captures the direct response of the merging firms to the merger, without accounting for equilibrium feedback effects between them or allowing for strategic reactions by non-merging rivals. Second, it also focuses exclusively on unilateral effects, abstracting from the possibility of coordinated conduct. Finally, estimating merger pass-through rates may be challenging in practice; although Lemma \ref{lemma:small.share.approximation.of.merger.pass.through.matrix} provides a useful alternative, it can generate bias when the merging firms' shares are large.\footnote{\citet{miller2017upward} also shows that the first-order approach may underpredict merger price effects when the demand function is convex, as in the case of CES demand. The first-order approach may overpredict merger price effects when prices are strategic substitutes, which can arise in random coefficient logit models, but such examples are not well-documented in the literature. }

Its main advantage, however, lies in its simplicity: it provides a sufficient-statistics framework that links pre-merger information—such as market concentration measures in this paper—to predictions about post-merger price effects and consumer harm. In the following sections, I show that this framework offers a clear characterization of how changes in the HHI translate into losses in consumer surplus. 
\section{Consumer Surplus and HHI \label{section:consumer.surplus.and.hhi}}

\subsection{Review of HHI}
The Herfindahl-Hirschman Index (HHI) is defined as
\[
\mathit{HHI} = \sum_{f \in \mathcal{F}} s_f^2.
\]
Here, $s_j$ denotes the share of product $j$, which may be based on either quantity or revenue. The total share of firm $f$ is given by $s_f = \sum_{j \in \mathcal{J}_f} s_j$, where $J_f$ is the set of products sold by firm $f$. I use $s_j^Q$ and $s_j^R$ to distinguish between quantity and revenue shares, respectively. Under the logit framework, $s_j^Q = q_j / N$. Under the CES framework, $s_j^R = p_jq_j / Y$.\footnote{The shares are calculated with the outside option. Shares excluding the outside option can be computed as $\tilde{s}_f = s_f/(1-s_0)$. }  Note that I keep the HHI scaled between 0 and 1, rather than multiplying by 10,000 as is commonly done to improve readability.

If firms $A$ and $B$ merge, the resulting change in the HHI is given by
\[
\Delta \mathit{HHI}_{AB} = 2s_A s_B.
\]
This calculation is based on pre-merger market shares and does not account for any shifts in market shares that may occur in the post-merger equilibrium.\footnote{Specifically, $\mathit{HHI}^\text{post} - \mathit{HHI}^\text{pre} = ((s_A+s_B)^2 + \sum_{f \neq A,B} s_f^2) - (s_A^2 + s_B^2 + \sum_{f \neq A,B} s_f^2) = 2s_A s_B$.} However, as I show below, using pre-merger shares in this way is logically consistent with the first-order approach, which relies solely on pre-merger information to predict potential merger harm.

\subsection{Relationship between Consumer Surplus and HHI}

The following proposition, which is the main result of this paper, shows that $\Delta \mathit{HHI}_{AB}$ is closely related to the impact of the merger on consumer surplus predicted by the first-order approach. 

\begin{proposition}\label{proposition:consumer.surplus.first.order.approach} 
Suppose firms $A$ and $B$ merge, and there is no merger-specific synergy. The first-order approximation to the merger-induced change in consumer surplus is 
\begin{equation}
\Delta \mathit{CS} =
\begin{cases}\label{equation:delta.cs.and.delta.hhi}
    -V_0^{\mathit{MNL}} \rho_1^{\mathit{MNL}} \rho_2^{\mathit{MNL}} \Delta \mathit{HHI}_{AB}^Q & (\mathit{MNL}) \\
    -V_0^\mathit{CES} \rho_1^\mathit{CES} \rho_2^\mathit{CES} \Delta \mathit{HHI}_{AB}^R & (\mathit{CES})
\end{cases}
,    
\end{equation}
where
\begin{align}
    V_0 & \equiv 
    \begin{cases}
        \frac{N}{\alpha} & \text{(MNL)} \\
        \frac{Y}{\sigma - 1} & \text{(CES)}
    \end{cases} \label{equation:monetary.scaling.factor}, \\
    \rho_1 & \equiv 
    \begin{cases}
        \frac{1}{\left(1-s_A\right) \left(1-s_B \right)} & \text{(MNL)} \\
        \frac{\frac{\sigma}{\sigma - 1}}{\left( \frac{\sigma}{\sigma - 1} -s_A^R \right) \left( \frac{\sigma}{\sigma - 1} -s_B^R \right)} & \text{(CES)} \\
    \end{cases}
    \label{equation:cross.firm.scaling.factor}
    ,
    \\
    \rho_2 & \equiv 
    \begin{cases}
         \sum_{j,l \in \mathcal{J_A} \cup \mathcal{J}_B} M_{jl} \frac{s_j}{s_l} \left(\frac{1}{2} \mathbb{I}_{l \in \mathcal{J}_A} \frac{s_l/(1-s_l)}{s_A/(1-s_A)} + \frac{1}{2} \mathbb{I}_{l \in \mathcal{J}_B} \frac{s_l/(1-s_l)}{s_B/(1-s_B)} \right) & \text{(MNL)} \\
         \sum_{j,l \in \mathcal{J_A} \cup \mathcal{J}_B} \frac{\sigma - 1}{\sigma} M_{jl} \frac{s_j^R}{s_l^R} \left(\frac{1}{2} \mathbb{I}_{l \in \mathcal{J}_A} \frac{s_l^R/(\frac{\sigma}{\sigma-1}-s_l^R)}{s_A^R/(\frac{\sigma}{\sigma-1}-s_A)} + \frac{1}{2} \mathbb{I}_{l \in \mathcal{J}_B} \frac{s_l^R/(\frac{\sigma}{\sigma-1}-s_l^R)}{s_B^R/(\frac{\sigma}{\sigma-1}-s_B^R)} \right) & \text{(CES)}
    \end{cases}
    \label{equation:within.firm.scaling.factor}
    .
\end{align}
\end{proposition}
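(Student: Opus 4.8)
The plan is to chain together the three approximations supplied in Section \ref{section:model}: $\Delta p_j \approx \mathit{UPP}_j$, $\Delta \mathit{CS}_j \approx -\Delta p_j\, q_j$, and $\Delta \mathit{CS} \approx \sum_{j \in \mathcal{J}_A \cup \mathcal{J}_B} \Delta \mathit{CS}_j$, which together give $\Delta \mathit{CS} \approx -\sum_{j} \mathit{UPP}_j\, q_j$. To evaluate this I need two ingredients expressed in pre-merger shares: the quantity diversion ratios $D_{j\to l}$ entering \eqref{equation:upward.pricing.pressure}, and the pre-merger equilibrium margins $p_l - c_l$. I would derive each separately for the MNL and CES systems of \eqref{equation:demand.functions}, substitute, and then collect terms until the $\Delta$HHI structure emerges.

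First, for the diversion ratios I differentiate the demand systems. Under MNL the own and cross derivatives are $\partial q_j/\partial p_j = -\alpha q_j(1-s_j)$ and $\partial q_l/\partial p_j = \alpha q_l s_j$, giving the clean form $D_{j\to l} = s_l/(1-s_j)$. Under CES the analogous computation yields $D_{j\to l} = (\sigma-1)s_l^R (p_j/p_l)/[\sigma - (\sigma-1)s_j^R]$; the stray price ratio is harmless because it will cancel against $q_j$ later. For the margins I impose the Bertrand first-order conditions on each firm's profit and, following the aggregative-games structure of \citet{nocke2018multiproduct}, establish that every product of a firm carries a \emph{common} markup under MNL, $p_l - c_l = 1/[\alpha(1-s_f)]$ for $l\in\mathcal{J}_f$, and a \emph{common} Lerner index under CES, $(p_l-c_l)/p_l = 1/[\sigma-(\sigma-1)s_f^R] = 1/[(\sigma-1)(\tfrac{\sigma}{\sigma-1}-s_f^R)]$. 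This common-markup property is the structural fact that makes the derivation collapse, and it is where I would be most careful.

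Next I substitute into \eqref{equation:upward.pricing.pressure} and use $\sum_{l\in\mathcal{J}_B} s_l = s_B$ (resp. $\sum s_l^R = s_B^R$) to obtain, for $j\in\mathcal{J}_A$, the compact forms $\mathit{UPP}_j = s_B/[\alpha(1-s_B)(1-s_j)]$ under MNL and $\mathit{UPP}_j = s_B^R p_j/[(\sigma-1)(\tfrac{\sigma}{\sigma-1}-s_B^R)(\tfrac{\sigma}{\sigma-1}-s_j^R)]$ under CES, with symmetric expressions for $k\in\mathcal{J}_B$ diverting to $\mathcal{J}_A$. Multiplying by $q_j = Ns_j$ (MNL) or $q_j = Ys_j^R/p_j$ (CES)—in the CES case the price cancels as anticipated—and summing over $\mathcal{J}_A\cup\mathcal{J}_B$ produces, up to the sign and the scaling factor $V_0$, a bracket of the form $\tfrac{s_B}{1-s_B}\sum_{j\in\mathcal{J}_A}\tfrac{s_j}{1-s_j} + \tfrac{s_A}{1-s_A}\sum_{k\in\mathcal{J}_B}\tfrac{s_k}{1-s_k}$ under MNL, and its CES analog with $\tfrac{\sigma}{\sigma-1}-s^R$ replacing $1-s$ throughout.

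The final step, and the one piece of genuine bookkeeping, is to recognize this bracket as $\rho_1\rho_2\,\Delta\mathit{HHI}_{AB}$. I would verify this by expanding $\rho_1\rho_2\cdot 2s_A s_B$: the factor $2s_A s_B$ from $\Delta\mathit{HHI}_{AB}$ cancels the normalizing denominators $s_A/(1-s_A)$ and $s_B/(1-s_B)$ hidden inside $\rho_2$, while $\rho_1 = 1/[(1-s_A)(1-s_B)]$ supplies exactly the surviving $(1-s_A)$ and $(1-s_B)$ factors, reproducing the bracket term by term; the CES case is identical under the substitution noted above. I expect no conceptual obstacle here—only the need to keep the two asymmetrically weighted merging-firm sums straight—so the main risk is a bookkeeping slip in matching $\rho_2$'s normalization against the $\Delta$HHI factor rather than any substantive difficulty.
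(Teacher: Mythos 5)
Your proposal is correct, and for the logit case it is essentially line-for-line the paper's own argument: the same diversion ratio $D_{j\to l}=s_l/(1-s_j)$, the same common-markup fact $p_l-c_l=1/[\alpha(1-s_f)]$ (which the paper pulls from the $\iota$-markup relation $\mu_f = 1/(1-\alpha^\ast s_f)$ of \citet{nocke2023aggregative} rather than re-deriving first-order conditions, but it is the same structural fact), the same compact $\mathit{UPP}_j = s_B/[\alpha(1-s_B)(1-s_j)]$, and the same regrouping of $\frac{s_B}{1-s_B}\sum_{j\in\mathcal{J}_A}\frac{s_j}{1-s_j}+\frac{s_A}{1-s_A}\sum_{k\in\mathcal{J}_B}\frac{s_k}{1-s_k}$ into $\rho_1\rho_2\,\Delta\mathit{HHI}_{AB}$. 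In the CES case you take a genuinely different, though algebraically equivalent, route: you differentiate the CES system directly to obtain the quantity diversion ratio $D_{j\to l}=(\sigma-1)s^R_l(p_j/p_l)/[\sigma-(\sigma-1)s^R_j]$ (which is correct) and let the price ratio cancel against $q_j = Y s^R_j/p_j$, whereas the paper first converts to revenue space, writing $\mathit{GUPPI}_j = (1+\epsilon_{jj}^{-1})\sum_l m_l D^R_{j\to l}$ using the revenue diversion ratio $D^R_{j\to l}=s^R_l/(1-s^R_j)$, the revenue elasticity $\epsilon^R_{jj}=-(1-s^R_j)(\sigma-1)$, and the identity $\epsilon^R_{jj}=\epsilon_{jj}+1$ imported from \citet{koh2024merger}, then applies $\Delta\mathit{CS}_j\approx \mathit{GUPPI}_j\times R_j$. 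Both computations land on $\Delta\mathit{CS}_j \approx -Y s^R_j s^R_B/\bigl[(\sigma-1)(\tfrac{\sigma}{\sigma-1}-s^R_j)(\tfrac{\sigma}{\sigma-1}-s^R_B)\bigr]$, so nothing substantive separates them. Your version buys self-containedness (no auxiliary revenue-space lemmas) and cleaner sign discipline --- the paper's CES appendix in fact drops the minus sign in $\Delta\mathit{CS}_j\approx \mathit{GUPPI}_j\times R_j$ and carries it through to the final display, a slip your uniform $\Delta\mathit{CS}\approx-\sum_j \mathit{UPP}_j\,q_j$ formulation avoids; the paper's version buys direct contact with the GUPPI and revenue-diversion objects standard in antitrust practice and in the cited companion work. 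Your closing verification that $\rho_1\rho_2\cdot 2s_As_B$ reproduces the bracket term by term is exactly the regrouping the paper performs.
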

\begin{proof}
    See Appendix \ref{section:proof.of.proposition.1}.
\end{proof}

Proposition \ref{proposition:consumer.surplus.first.order.approach} shows that $\Delta \mathit{CS}$ is proportional to $\Delta \mathit{HHI}_{AB}$, with the proportionality coefficient determined by the market size, price responsiveness parameter, and the distribution of market shares across and within firms (the latter relevant when merging firms are multiproduct producers). Thus, equation \eqref{equation:delta.cs.and.delta.hhi} provides a means to directly translate $\Delta \mathit{HHI}_{AB}$ into statements about harm to consumers using data from the merging firms. 

First, $V_0$ converts $\Delta \mathit{HHI}_{AB}$ into monetary units, as it is the product of market size and the inverse of the price responsiveness parameter. Second, $\rho_1$ serves as a cross-firm scaling factor, reflecting the interaction of total shares across firms. Finally, $\rho_2$ functions as a within-firm scaling factor, capturing how shares are distributed among products within multiproduct firms. 

I defer further interpretation and comparative statics analysis for the components of the proportionality coefficients to Section \ref{section:analysis.of.the.proportionality.coefficient}. In Section \ref{section:comparison.to.Nocke.Shutz}, I compare equation \eqref{equation:delta.cs.and.delta.hhi} with the formula derived by \citet{nocke2023aggregative}. Finally, I provide an empirical example in Section \ref{section:empirical.example} using the Heinz/Beech-Nut merger in the U.S. market for baby food.

\begin{remark}[Compact Representation]
Recalling $\varphi = 1$ if MNL and $\varphi = \frac{\sigma}{\sigma - 1}$ if CES, the first-order approximation to the merger welfare effect to consumers \eqref{equation:delta.cs.and.delta.hhi} can be compactly written as 
\begin{align*}
\Delta \mathit{CS}  & = -V_0 \rho_1 \rho_2 \Delta \mathit{HHI}, \quad \text{where }  \\
    \rho_1  &\equiv \frac{\varphi}{(\varphi - s_A)(\varphi - s_B)}, \quad \text{and} \\
    \rho_2  & \equiv \sum_{j,l\in \mathcal{J}_A \cup \mathcal{J}_B} \left( \frac{M_{jl}}{\varphi} \right) \left( \frac{s_j}{s_l} \right) \left( \frac{1}{2} \mathbb{I}_{l \in \mathcal{J}_A } \frac{s_l/(\varphi-s_l)}{s_A/(\varphi - s_A)} + \frac{1}{2} \mathbb{I}_{l \in \mathcal{J}_B} \frac{s_l/(\varphi - s_l)}{s_B/(\varphi - s_B)} \right).
\end{align*}    
\end{remark}

\begin{remark}[Single-Product Case] \label{remark:single.product.case}
    If the merging firms are producers of single products $\rho_2$ boils down to $\rho_2 = \frac{1}{2\varphi}(M_{AA} + M_{AB}\frac{s_A}{s_B} + M_{BA} \frac{s_B}{s_A} + M_{BB})$. Further assuming $M \approx \varphi I$ simplifies this expression to $\rho_2 \approx 1$. In this case, expression \eqref{equation:delta.cs.and.delta.hhi} reduces to $\Delta \mathit{CS} \approx - V_0 \left( \frac{\varphi}{(\varphi - s_A)(\varphi - s_B)} \right) \Delta \mathit{HHI}_{AB}$, which provides a simple and direct way to connect $\Delta \mathit{HHI}_{AB}$ to consumer harm based on firm-level shares.
\end{remark}

\begin{remark}[Merger Screening Decision Rule]
In the spirit of the first-order approach, equation \eqref{equation:delta.cs.and.delta.hhi} allows the analyst to estimate total consumer harm using only data from the merging firms. If the agency seeks to evaluate consumer harm net of potential merger-specific cost savings, the formula can be adjusted to incorporate marginal cost reductions.\footnote{The net upward pricing pressure for a product $j \in \mathcal{J}_A \cup \mathcal{J}_B$ is $\mathit{UPP}_j^\text{net} = \mathit{UPP}_j - \Delta c_j$, where $\Delta c_j = c_j^\text{pre} - c_j^\text{post} > 0$ is the savings from marginal cost. Then since $\Delta p^\text{net} \approx M \cdot \mathit{UPP}^\text{net} = M \cdot (\mathit{UPP} - \Delta c)$, we get $\Delta \mathit{CS}^\text{net} \approx - \Delta p^\text{net$\top$} q \approx -\mathit{UPP}^\top M^\top q + \Delta c^\top M^\top q$. Since, $\eqref{equation:delta.cs.and.delta.hhi}=-\mathit{UPP}^\top M^\top q$, accounting for potential cost efficiency amounts to adding back $\Delta c^\top M^\top q >0$ to \eqref{equation:delta.cs.and.delta.hhi}. Using $M \approx \varphi I$, the cost savings term can be further simplified as $\Delta c^\top M^\top q \approx \varphi \sum_{j \in \mathcal{J}_A \cup \mathcal{J}_B} \Delta c_j q_j$. } This provides a clear decision rule: the agency should block the proposed merger if the net impact on consumer surplus is negative. Unlike the traditional screening rule, which relies solely on the level and change in HHI, this approach explicitly incorporates efficiency considerations, offering a more nuanced and economically grounded assessment of merger effects.     
\end{remark}



\section{Analysis of the Proportionality Coefficient \label{section:analysis.of.the.proportionality.coefficient}}
Two mergers with the same value of $\Delta$HHI may have different proportionality coefficients due to differences in the distribution of shares of the merging firms. To sharpen the intuition behind how the proportionality coefficients behave, I analyze how the proportionality coefficient in \eqref{equation:delta.cs.and.delta.hhi} depends on the market size, consumer price responsiveness, and the distribution of shares across and within firms.

\subsection{Monetary Scaling Factor}
The term $V_0$ in \eqref{equation:monetary.scaling.factor} serves as a monetary scaling factor that translates $\Delta \mathit{HHI}_{AB}$ into monetary units. It equals the product of market size and the inverse of the price responsiveness parameter. Under the logit model, the price responsiveness parameter $\alpha$ represents the marginal utility of income, so $1/\alpha$ converts utility units to monetary units (e.g., dollars), and multiplying by $N$ scales the per-capita measure to the total market. In the CES case, the marginal utility of income is $(\sigma - 1)/Y$, so multiplying by $Y/(\sigma - 1)$ converts welfare changes in utils to monetary units.\footnote{The indirect CES utility is $V(p,Y) = Y/P(p)$, where $P(p)$ is the CES price index. Let $u = (\sigma - 1)\log V(p,Y)$ be the log-cardinalized utility scaled by $(\sigma-1)$. The marginal utility of income is then $\partial u/\partial Y = (\sigma - 1)/Y$. } Viewed this way, $-\rho_1 \rho_2 \Delta \mathit{HHI}_{AB}$ can be interpreted as the change in consumer utility caused by the merger.

Comparative statics with respect to $V_0$ are straightforward: consumer harm increases with market size and with less price-sensitive demand. In particular, more inelastic demand means consumers offer weaker resistance to post-merger price increases, shifting a greater share of the burden onto them.

\subsection{Cross-Firm Scaling Factor}

The term $\rho_1$ in \eqref{equation:cross.firm.scaling.factor} serves as a cross-firm scaling factor, reflecting the interaction of total shares across firms. The following proposition studies the behavior of $\rho_1$ with respect to the merging firms' firm-level shares.

\begin{proposition}\label{proposition:cross.firm.scaling.factor.1}
Suppose that $s_f < \varphi$ for $f=A,B$.
\begin{enumerate}
    \item \label{proposition:cross.firm.scaling.factor.1.3} $\rho_1 \to 1 / \varphi$ as the merging firms' shares approach zero.
    \item \label{proposition:cross.firm.scaling.factor.1.1} $\rho_1$ is increasing in the merging firms' shares.
    \item \label{proposition:cross.firm.scaling.factor.1.2} $\rho_1$ is strictly convex in the merging firms' shares.
\end{enumerate}
\end{proposition}
\begin{proof}
    See Appendix \ref{section:proof.of.proposition.2}.
\end{proof}

Proposition \ref{proposition:cross.firm.scaling.factor.1}.\ref{proposition:cross.firm.scaling.factor.1.3} implies that approximation $\rho_1 \approx 1/\varphi$ is valid when the merging firms' shares are small; this result will be of particular interest when comparing my formula to that of \citet{nocke2023aggregative} in Section \ref{section:comparison.to.Nocke.Shutz}. Proposition \ref{proposition:cross.firm.scaling.factor.1}.\ref{proposition:cross.firm.scaling.factor.1.1} provides a direction of how $\rho_1$ behaves as merging firms' shares get larger. 

Proposition \ref{proposition:cross.firm.scaling.factor.1}.\ref{proposition:cross.firm.scaling.factor.1.2} implies that a merger between two equally sized firms is less harmful than one between a dominant firm and a much smaller rival \emph{if the two mergers produce the same level of $\Delta \mathit{HHI}_{AB}$}. In an asymmetric merger, the larger firm raises its price on a large base of sales, while the smaller firm raises its price sharply on a smaller base of sales. These effects compound, leading to a greater overall loss of consumer welfare. In contrast, when the merging firms are similar in size, both price increases tend to be more moderate, producing less harm.

To build intuition, consider a simple example with single-product firms under logit demand. Suppose there are two mergers, one between firms with market shares $(s_A,s_B) = (0.2,0.2)$ and another between firms with shares $(s_A,s_B) = (0.8, 0.05)$. Both mergers generate the same change in concentration, $\Delta \mathit{HHI}_{AB} = 2s_As_B = 0.08$. Even though $\Delta \mathit{HHI}_{AB}$ is identical, the mergers differ in their implications for $\rho_1$.

Figure \ref{figure:rho.1.example} plots the combinations of $(s_A,s_B)$ that yield $\Delta \mathit{HHI}_{AB} = 0.08$. The solid line represents this set, while the dashed lines show the level curves of $\rho_1$ passing through the symmetric and asymmetric cases. In the symmetric case, $\rho_1 \approx 1.56$, whereas in the asymmetric case, $\rho_1 \approx 5.26$. Thus, the asymmetric merger produces a much higher value of $\rho_1$, indicating greater potential harm.

\begin{figure}[htbp]
    \centering
    \includegraphics[width=0.38\linewidth]{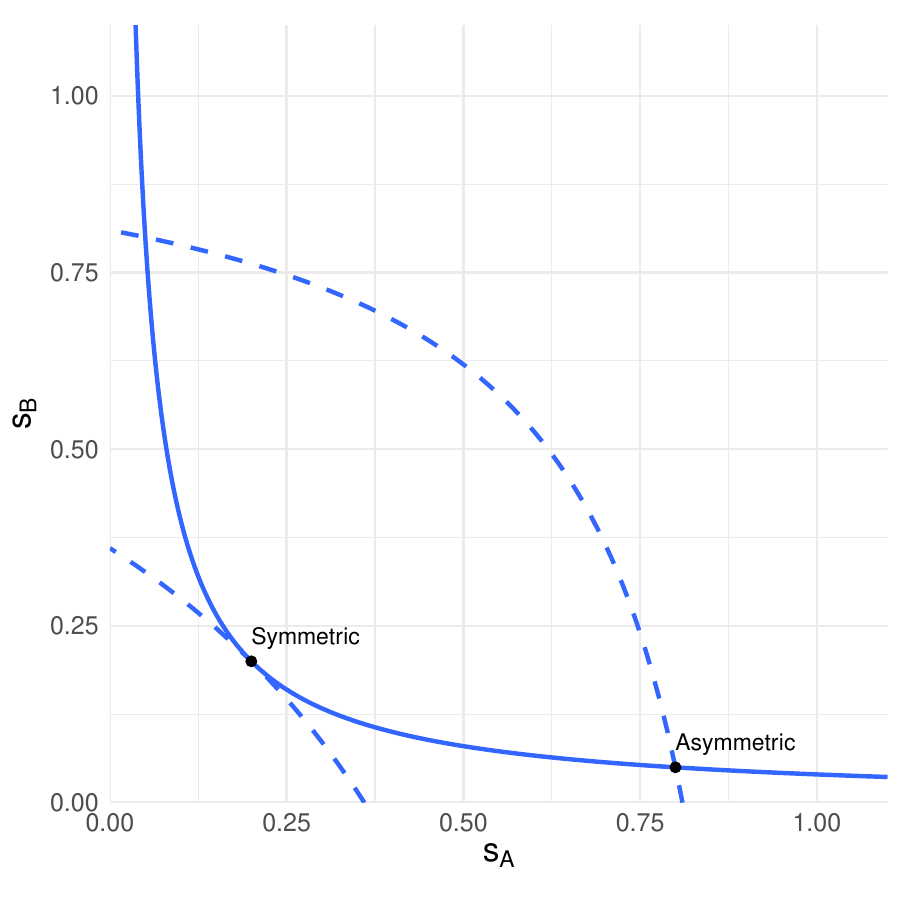}
    \caption{Example $\rho_1$ for Symmetric vs. Asymmetric Firm Shares }
    \floatfoot{\emph{Notes:} This figure illustrates that a merger between firms with asymmetric shares can generate a higher value of $\rho_1$ than a merger between firms with symmetric shares, even when both have the same $\Delta$HHI. The solid line represents the set of $(s_A,s_B)$ pairs satisfying $2s_As_B = 0.08$. The inner dashed line shows the level curve of $\frac{1}{(1-s_A)(1-s_B)}$ passing through $(s_A,s_B) = (0.2,0.2)$, while the outer dashed line shows the corresponding level curve passing through $(s_A,s_B) = (0.8, 0.05)$.}
    \label{figure:rho.1.example}
\end{figure}

This pattern arises because reducing one firm's share by a small amount allows for a much larger increase in the other firm's share to keep $\Delta \mathit{HHI}_{AB}$ constant. The trade-off is therefore not one-to-one, and as asymmetric increases, the combined shares $s_A + s_B$ can grow substantially even though $\Delta \mathit{HHI}_{AB}$ remains fixed.\footnote{A more precise statement would require comparing the curvatures of the level curve of $\Delta \mathit{HHI}_{AB}$ and that of $\rho_1$.}

To compare the consumer harm, assume that the upward pricing pressure closely approximates the actual price change ($\Delta p_j \approx \mathit{UPP}_j$), so that the effect of $\rho_2$ can be ignored. In the symmetric merger, $(\mathit{UPP}_A, \mathit{UPP}_B) = (0.3125/\alpha, 0.3125/\alpha)$. In the asymmetric merger, $(\mathit{UPP}_A, \mathit{UPP}_B) = (0.2632/\alpha, 4.2105/\alpha)$. In the asymmetric case, the larger firm applies a modest price increase to a substantial customer base, while the smaller firm applies a steep increase to a smaller base. The effects combine to generate a much larger total loss in consumer welfare. The change in consumer surplus is $\Delta \mathit{CS} = 0.125 (N/\alpha) $ in the symmetric merger and $\Delta \mathit{CS} = 0.4211 (N/\alpha) $ in the asymmetric merger.\footnote{If firm $A$ mergers with firm $B$, the upward pricing pressure for single-product firm $A$ facing logit demand with price sensitivity parameter $\alpha$ is $\mathit{UPP}_A = D_{A \to B} \times m_B = \frac{s_B}{1-s_A}\times \frac{1}{\alpha (1-s_B)}$. Approximating the price increase as $\Delta p_A \approx \mathit{UPP}_A$, the resulting change in consumer surplus is $\Delta \mathit{CS}_A \approx \mathit{UPP}_A \times q_A = \frac{N}{\alpha} \frac{s_As_B}{(1-s_A)(1-s_B)}$, where $q_A = s_A N$. $\Delta \mathit{CS}_B$ can be derived symmetrically.} Thus, even though both mergers have the same $\Delta \mathit{HHI}_{AB}$, the asymmetric merger is more than three times as harmful to consumers.

The idea that two mergers can generate the same $\Delta$HHI while yielding very different values of $\rho_1$ applies more broadly. Define the set of share combinations that yield a given change in concentration:
\begin{equation}\label{equation:shares.set}
    \mathcal{S}(c_0, \Delta_0) \equiv \{(s_A,s_B):\; 2s_As_B = \Delta_0, \; s_A + s_B \leq c_0 \},
\end{equation}
where the restriction $s_A + s_B \leq c_0$ prevents shares from becoming too arbitrary. For each pair of $(s_A,s_B)$ in this set, there is a corresponding value of $\rho_1$. Figure \ref{figure:rho.bounds} plots the upper and lower bounds of $\rho_1$ for $c_0 = 0.9$. The lower bound occurs when the merging firms are symmetric, while the upper bound occurs when shares are highly asymmetric.\footnote{Note that $\mathcal{S}(c_0, \Delta_0)$ is non-empty only if $c_0^2/2 \geq \Delta_0$. On an $(s_A,s_B)$ plane, the level curve of $2s_As_B = \Delta_0$ crosses the line $s_A + s_B = c_0$ only if $c_0^2/2 \geq \Delta_0$. It is straightforward to verify that $\rho_1^\mathit{MNL}$ is minimized when $s_A=s_B = \sqrt{\Delta_0/2}$ and maximized when $s_A = \frac{1}{2}(c_0 - \sqrt{c_0^2 - 2\Delta_0})$ and $s_B = \frac{1}{2} (c_0 + \sqrt{c_0^2 - 2\Delta_0})$. The lower and upper bounds are $\underline{\rho}_1^\mathit{MNL} = \frac{1}{(1 - \sqrt{\Delta_0 / 2})^2}$ and $\overline{\rho}_1^\mathit{MNL} = \frac{2}{\Delta_0 - 2c_0 + 2}$.} As the figure demonstrates, the cross-firm coefficient $\rho_1$ can vary substantially depending on how market shares are distributed between the merging firms, particularly when $\Delta$HHI is small.

\begin{figure}[htbp!]
    \centering
    \includegraphics[width = 0.5\textwidth]{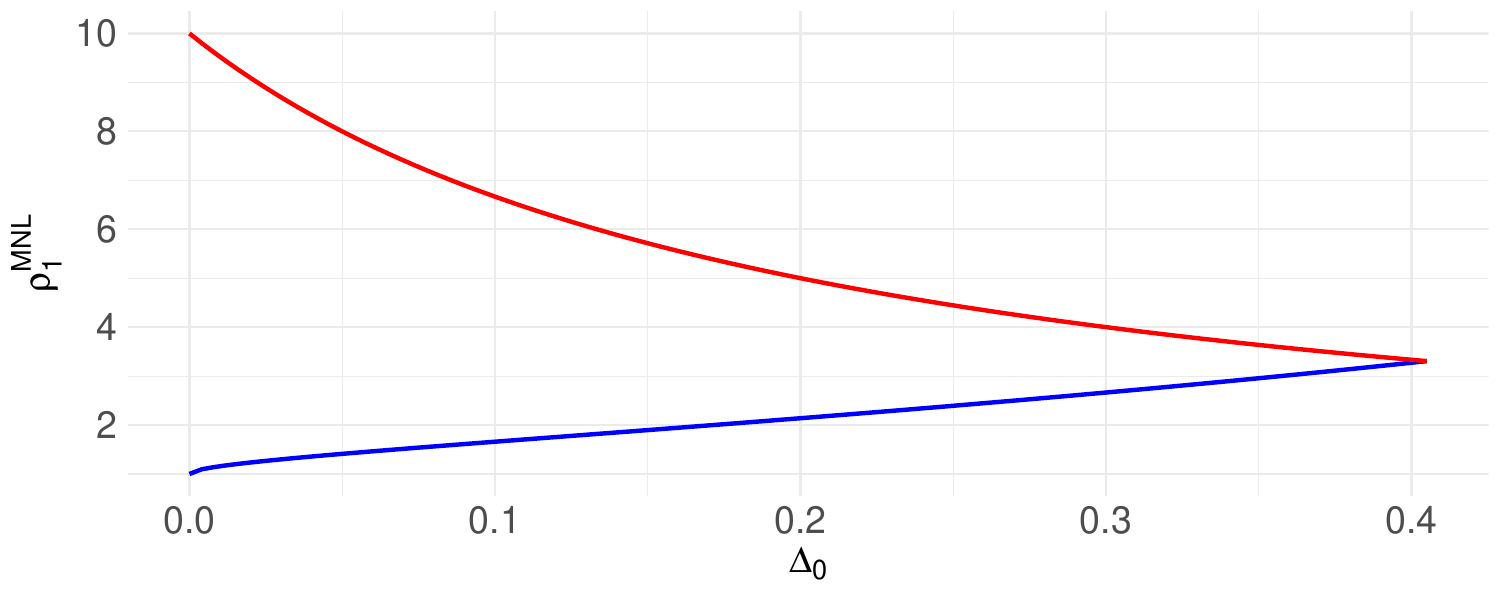}
    \caption{Upper and Lower Bound of $\rho_1^{MNL}$ for $c_0 = 0.9$}
    \label{figure:rho.bounds}
    \floatfoot{\emph{Notes}: This figure illustrates that mergers with the same $\Delta$HHI can differ substantially in their values of $\rho_1$. The upper contour represents the maximum value of $\rho_1$ for a given $\Delta \mathit{HHI} = \Delta_0$, which occurs when the merging firms' shares are most asymmetric. The lower contour represents the minimum value of $\rho_1$, attained when the merging firms' shares are symmetric.}
\end{figure}

\subsection{Within-Firm Scaling Factor}
The term $\rho_2$ in \eqref{equation:within.firm.scaling.factor} serves as a within-firm scaling factor, capturing how market share is distributed across the products of a multi-product firm. In principle, $\rho_2$ depends on both within-firm share distribution and cross-firm interactions through the merger pass-through matrix $M$. However, drawing on \citet{miller2017upward}'s observation that the diagonal elements of $M$ tend to dominate the off-diagonal elements, we can approximate the matrix by $M \approx \varphi I$ (as supported by Lemma \ref{lemma:small.share.approximation.of.merger.pass.through.matrix}). Under this simplification, $\rho_2$ simplifies to:
\[
\rho_2 \approx \frac{1}{2} \sum_{l \in \mathcal{J}_A} \frac{s_l / (\varphi - s_l)}{s_A / (\varphi - s_A)} + \frac{1}{2} \sum_{l \in \mathcal{J}_B} \frac{s_l / (\varphi - s_l)}{s_B / (\varphi - s_B)} . 
\]
Here, $\rho_2$ is the average of two additively separable components, each depending only on the distribution of product-level shares within a firm (holding the total firm share fixed). If each firm produces a single product, $\rho_2$ collapses to $\rho_2 \approx 1$. Thus, $\rho_2$ becomes relevant primarily when firms operate multiple products.\footnote{Without the approximation of the merger pass-through matrix, $\rho_2 = \frac{1}{2\varphi} (M_{AA} + M_{AB} \frac{s_A}{s_B} + M_{BA}\frac{s_B}{s_A} + M_{BB})$ (see Remark \ref{remark:single.product.case}). Thus, even for single-product firms, $\rho_2$ can depend on the merging firms' shares. However, when the shares are small, $M \approx \rho I$, making this dependence minor and supporting the interpretation of $\rho_2$ as primarily a scaling factor for multiproduct settings.}

The following result establishes that approximating $\rho_2$ as 1 is reasonable when merging firms have small market shares, regardless of the number of products.
\begin{proposition} \label{proposition:small.share.approximation.of.rho.2}
    $\rho_2 \to 1$ as the merging firms' market shares approach zero.
\end{proposition}
\begin{proof}
    See Appendix \ref{section:proof.of.proposition.3}.
\end{proof}
Proposition \ref{proposition:small.share.approximation.of.rho.2} implies that for mergers involving small firms, analysts can safely use the approximation $\rho_2 \approx 1$ without calculating the merger pass-through matrix.

Further analytical comparative statics for $\rho_2$ are difficult because it is a complex function of product- and firm-level shares, which also affect the pass-through matrix. To better understand its behavior, I rely on simulations under the logit demand model.\footnote{I find similar patterns under the CES demand assumption.} In these simulations, I set the price responsiveness parameter to $\alpha = 1$. Importantly, I do not impose the simplification $M \approx \varphi I$ when calculating $\rho_2$; instead, I compute the exact merger pass-through matrix.

I first examine the case where each firm produces a single product, even though $\rho_2$ is typically interpreted in a multi-product setting. This exercise clarifies the role of cross-firm share interactions that operate through the pass-through matrix. 

Figure \ref{figure:rho2.and.rho.under.asymmetry} uses contour plots to display how $\rho_2$ and the product $\rho_1 \times \rho_2$ vary with the firms' shares $(s_A, s_B)$. Figure \ref{figure:rho2.under.asymmetry} shows that, as predicted by Proposition \ref{proposition:small.share.approximation.of.rho.2}, $\rho_2 \approx 1$ when both firms' shares are small. However, $\rho_2$ is nonlinear and non-monotonic, with its value depending in a complex way on the relative shares of the two firms. Increasing asymmetry between firms' shares (holding $\Delta$HHI fixed) generally increases $\rho_2$, but the effect can be ambiguous in certain regions. Yet, Figure \ref{figure:rho.under.asymmetry} shows that, when considering the combined term $\rho_1 \times \rho_2$, increasing asymmetry unambiguously raises the value of $\rho_1 \times \rho_2$ for a fixed $\Delta$HHI.

\begin{figure}[htbp!]
\centering
\begin{subfigure}{.5\textwidth}
  \centering
  \includegraphics[width=.99\linewidth]{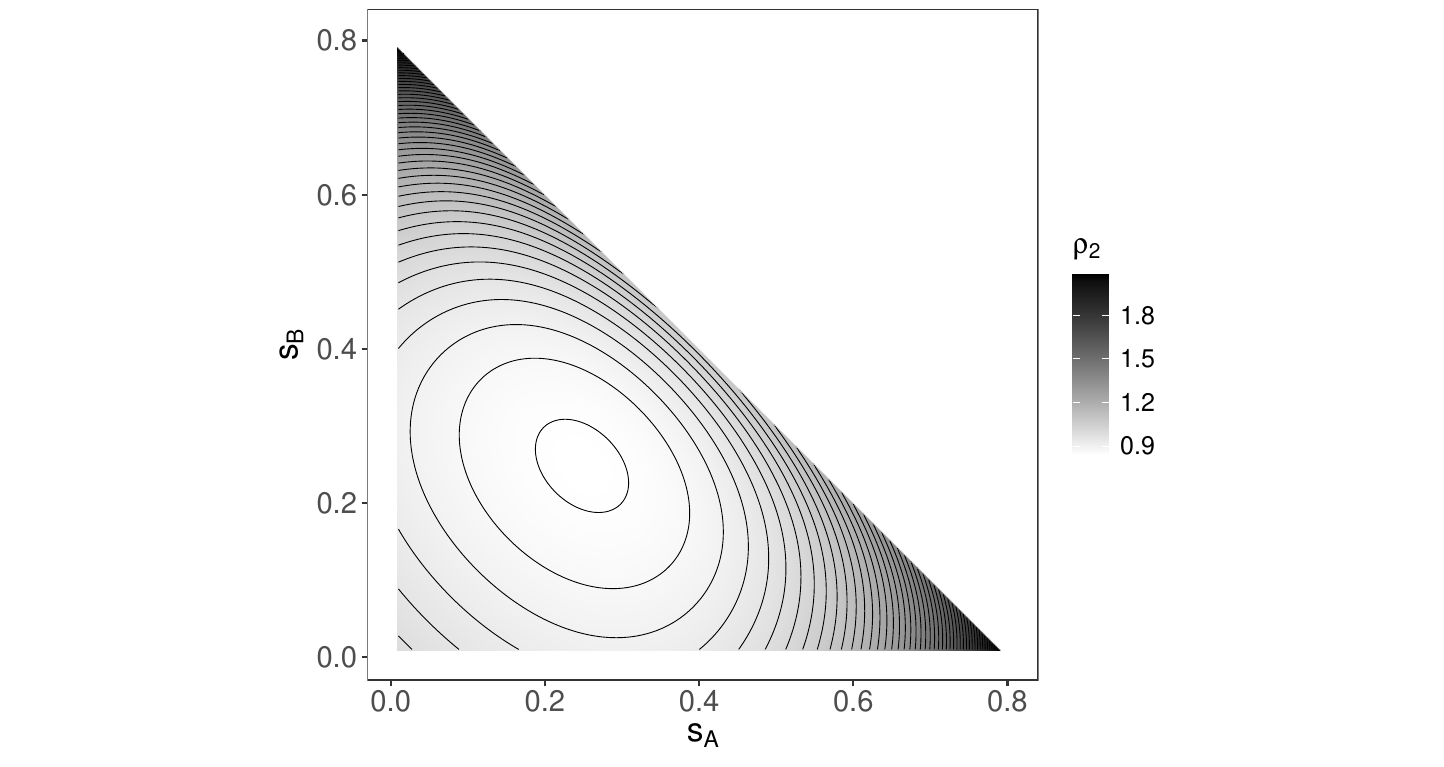}
  \caption{$\rho_2$}
  \label{figure:rho2.under.asymmetry}
\end{subfigure}%
\begin{subfigure}{.5\textwidth}
  \centering
  \includegraphics[width=.99\linewidth]{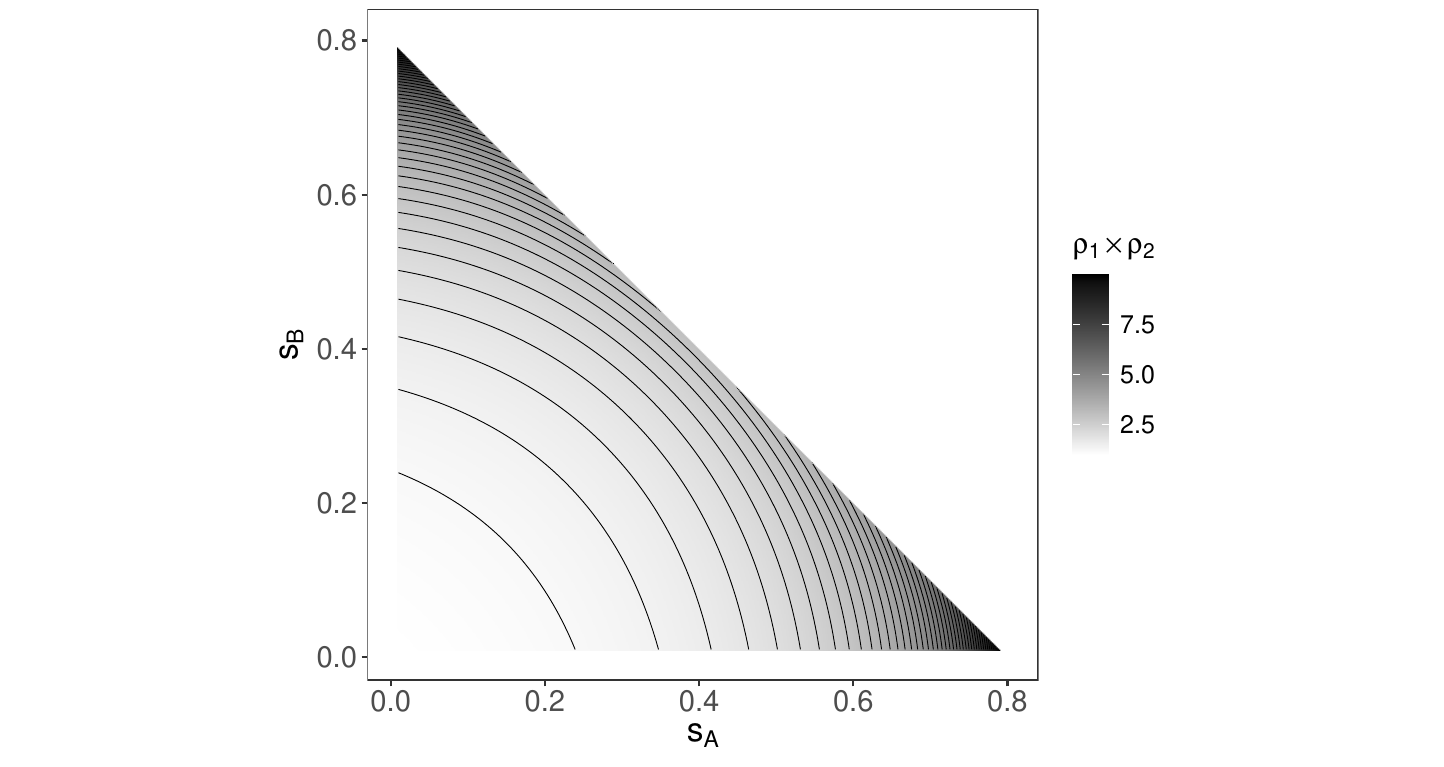}
  \caption{$\rho_1 \times \rho_2$}
  \label{figure:rho.under.asymmetry}
\end{subfigure}
\caption{The Values of $\rho_2$ and $\rho_1 \times \rho_2$ in Single-Product Logit Case}
\label{figure:rho2.and.rho.under.asymmetry}
\floatfoot{\emph{Notes}: The figure shows how the values of $\rho_2$ and $\rho_1 \times \rho_2$ vary with respect to $(s_A,s_B)$ under the single-product logit demand assumption. I constrain the set of $(s_A,s_B)$ to be those satisfying $s_A + s_B \leq 0.8$.}
\end{figure}

Next, I analyze how $\rho_2$ behaves when each firm produces multiple products with symmetric product-level shares. This exercise clarifies the role of total shares in multi-product firms and the number of products they produce. Suppose each firm has a total market share $s_f$ and produces $J$ products, so that each product has a share $s_j = s_f / J$. Figure \ref{figure:rho2.and.rho.under.symmetry} plots $\rho_2$ and $\rho_1 \times \rho_2$ as functions of $s_f$ and $J$. Figure \ref{figure:rho2.under.symmetry} remains non-monotonic in firm shares, consistent with the single-product results in Figure \ref{figure:rho2.under.asymmetry}. Increasing the number of products $J$ raises $\rho_2$ for a fixed $s_f$, but the effect is negligible unless the firms have very large market shares. On the other hand, Figure \ref{figure:rho.under.symmetry} shows that the combined term $\rho_1 \times \rho_2$ is monotonic in firm shares. Having more products has only a minor effect unless the firms are very large.

\begin{figure}[htbp!]
\centering
\begin{subfigure}{.5\textwidth}
  \centering
  \includegraphics[width=.99\linewidth]{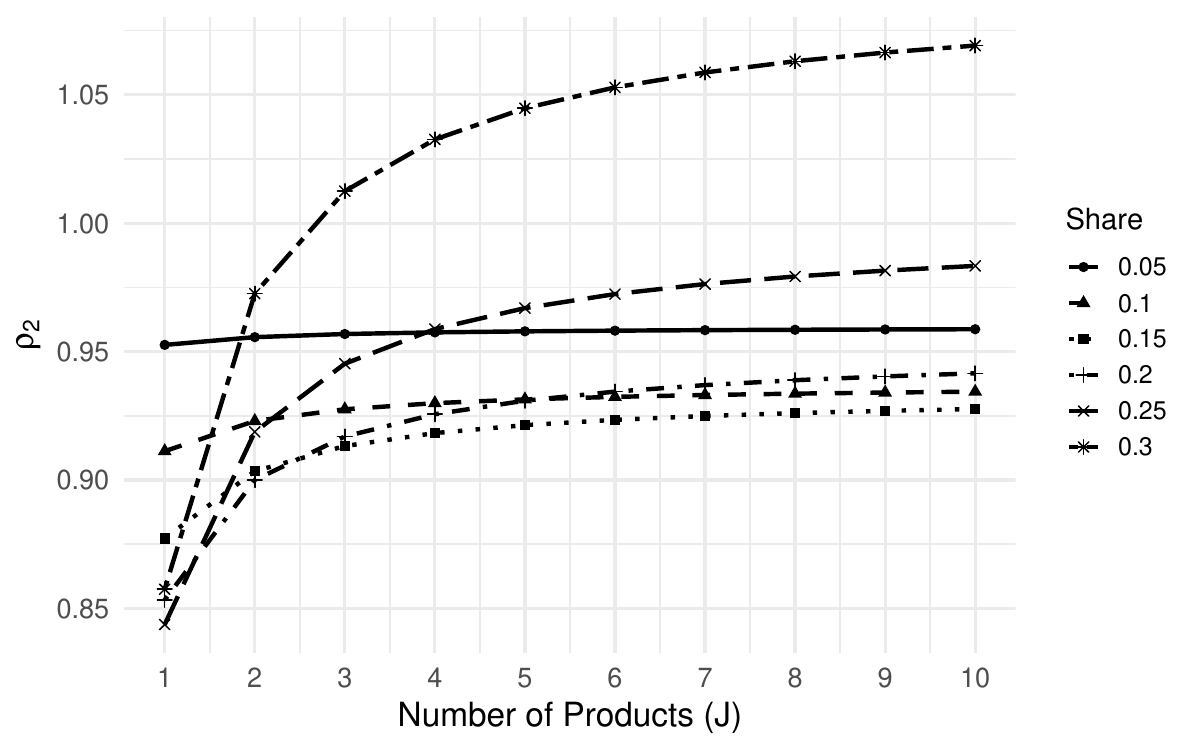}
  \caption{$\rho_2$}
  \label{figure:rho2.under.symmetry}
\end{subfigure}%
\begin{subfigure}{.5\textwidth}
  \centering
  \includegraphics[width=.99\linewidth]{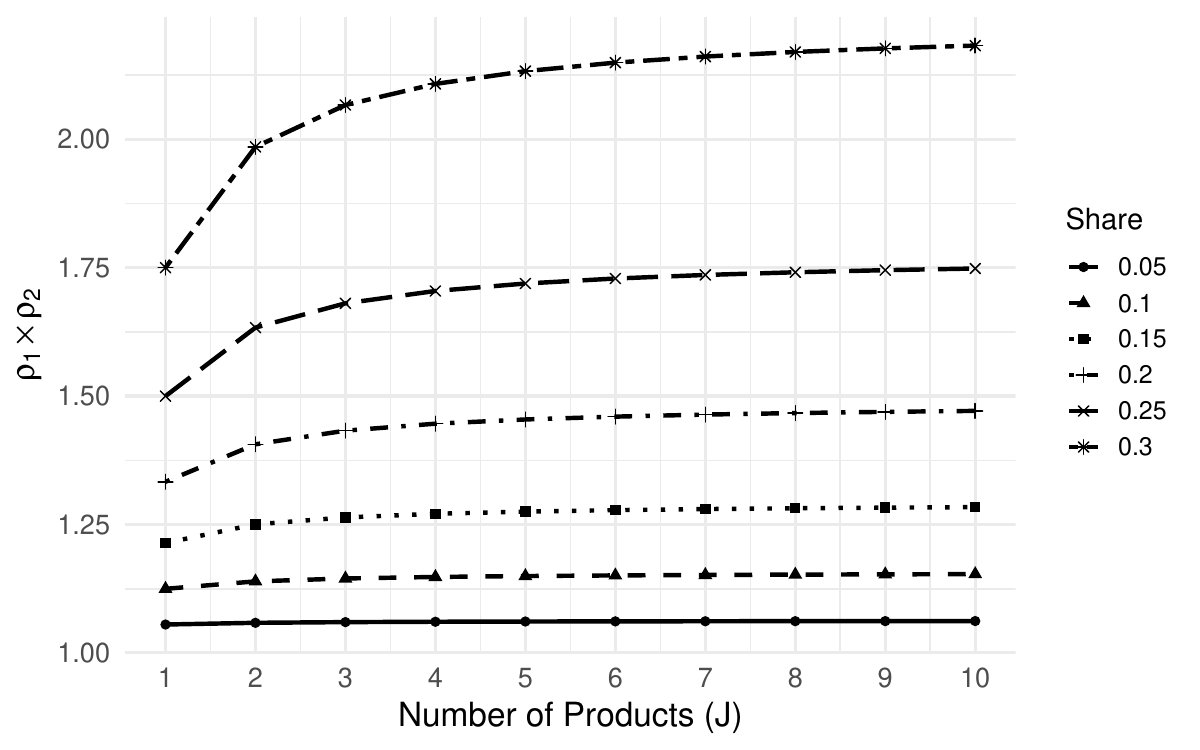}
  \caption{$\rho_1 \times \rho_2$}
  \label{figure:rho.under.symmetry}
\end{subfigure}
\caption{The Values of $\rho_2$ and $\rho_1 \times \rho_2$ for Symmetric Multiproduct Firms}
\label{figure:rho2.and.rho.under.symmetry}
\floatfoot{\emph{Notes}: This figure plots the values of $\rho_2$ and $\rho_1 \times \rho_2$ assuming symmetric firms that produce $J$ products with equal market shares under the logit demand assumption. ``Share'' indicates the value of each firm's share $s_f$.}
\end{figure}

The simulations show that $\rho_2$ is typically close to one when merging firms have small or moderate market shares, supporting the practical approximation $\rho_2 = 1$. While $\rho_2$ can exhibit nonlinear and non-monotonic behavior---particularly for large firms or highly uneven product mixes---these effects are generally modest and only become significant when firm shares are very large. Moreover, its influence is dominated by $\rho_1$, as the combined term $\rho_1 \times \rho_2$ rises predictably with firm size and asymmetry, holding $\Delta$HHI fixed. Thus, in most merger analyses, $\rho_2$ acts as a minor adjustment factor, and the key variation in outcomes is driven primarily by $\rho_1$.
\section{Comparison to \citet{nocke2023aggregative} \label{section:comparison.to.Nocke.Shutz}}
\subsection{Comparison of Formulas}
Equation \eqref{equation:delta.cs.and.delta.hhi} in Proposition \ref{proposition:consumer.surplus.first.order.approach} aligns with the findings of \citet{nocke2022concentration} and \citet{nocke2023aggregative}: consumer harm from a merger is proportional to $\Delta\mathit{HHI}$. Specifically, \citet{nocke2023aggregative} Proposition 8 shows that when market shares are small, the market power effect of a merger without synergies can be approximated as
\begin{equation}\label{equation:nocke.shutz.2023.result}
    \Delta \mathit{CS} \approx 
    \begin{cases}
        - V_0^\mathit{MNL} \Delta \mathit{HHI}_{AB} & \text{(MNL)} \\
        - V_0^\mathit{CES} \frac{\sigma - 1}{\sigma} \Delta \mathit{HHI}^R_{AB} & \text{(CES)}
    \end{cases}
    ,
\end{equation}
where the proportionality coefficient is constant and does not depend on the merging firms' individual market shares.

The key difference between their result in \eqref{equation:nocke.shutz.2023.result} and my expression in \eqref{equation:delta.cs.and.delta.hhi} lies in the underlying approximation method. \citet{nocke2023aggregative} relies on a second-order Taylor expansion around the limit of small market shares and monopolistic competition. By contrast, equation \eqref{equation:delta.cs.and.delta.hhi} is derived using the first-order approach of \citet{farrell2010antitrust} and \citet{jaffe2013first}. This distinction matters: Proposition \ref{proposition:consumer.surplus.first.order.approach} shows that, in my framework, pre-merger market share distributions directly affect merger harm through the scaling factors $\rho_1$ and $\rho_2$. In other words, even conditional on $\Delta \mathit{HHI}$, the way market shares are distributed---both across and within the merging firms---remain economically relevant.

Importantly, the two formulations are fully consistent in the limit of small market shares. From Propositions \ref{proposition:cross.firm.scaling.factor.1} and \ref{proposition:small.share.approximation.of.rho.2}, as $(s_A, s_B) \to (0,0)$, we have $\rho_1 \to 1/ \varphi$ and $\rho_2 \to 1$. Substituting these limits into equation \eqref{equation:delta.cs.and.delta.hhi} yields $\Delta \mathit{CS} \to \frac{V_0}{\rho} \mathit{HHI}_{AB}$, which exactly matches the characterization in equation \eqref{equation:nocke.shutz.2023.result}. Thus, my result is consistent with \citet{nocke2023aggregative} but provides a richer picture of how share distribution shapes merger effects when firms have non-negligible market shares.

\subsection{Monte Carlo Experiment}
I run Monte Carlo experiments to examine how my formula \eqref{equation:delta.cs.and.delta.hhi} performs relative to \citet{nocke2023aggregative}'s formula \eqref{equation:nocke.shutz.2023.result}. I follow the simulation design in \citet{miller2017upward}.\footnote{I consider a market with six single-product firms. I randomly draw the share and margin of firm 1 and use these to calibrate the model parameters. I then simulate a merger between firms 1 and 2 and compute the resulting change in consumer surplus, normalizing the market size so that $V_0 = 1$. Appendix \ref{section:simulation} provides further details on the Monte Carlo experiments.} I apply my approximation $M \approx \varphi I$ (Lemma \ref{lemma:small.share.approximation.of.merger.pass.through.matrix}) to gauge how they perform without requiring the analyst to calculate pass-through rates, which are generally complex functions of model parameters and data.\footnote{In Appendix \ref{section:misspecification}, I also examine the role of model misspecification. I find that using the MNL formula generally predicts a larger consumer welfare loss from a merger than the CES formula.}

Monte Carlo experiments indicate that Proposition \ref{proposition:consumer.surplus.first.order.approach} improves upon the approximation in \citet{nocke2023aggregative}. Figure \ref{figure:consumer.surplus.simulation.comparison} compares the two approaches: their formula in \eqref{equation:nocke.shutz.2023.result} and my first-order approach in \eqref{equation:delta.cs.and.delta.hhi}. For the logit case, the Nocke-Schutz formula tends to underestimate the welfare effects (Figure \ref{figure:consumer.surplus.simulation.logit.1}). My approach provides a closer approximation on average, although it can exhibit greater variance and occasionally overestimate welfare effects when the actual change in consumer surplus is large (Figure \ref{figure:consumer.surplus.simulation.logit.2}). This difference arises because \eqref{equation:nocke.shutz.2023.result} omits the $\rho_1^\mathit{MNL}$ term, which is always greater than one. The CES results are similar. Again, the Nocke-Schutz formula underestimates the true effects (Figure \ref{figure:consumer.surplus.simulation.ces.1}). In contrast, the first-order approach performs quite well, with a dense clustering of points along the 45-degree line (Figure \ref{figure:consumer.surplus.simulation.ces.2})

\begin{figure}[htbp!]
\centering
\begin{subfigure}{.5\textwidth}
  \centering
  \includegraphics[width=.5\linewidth]{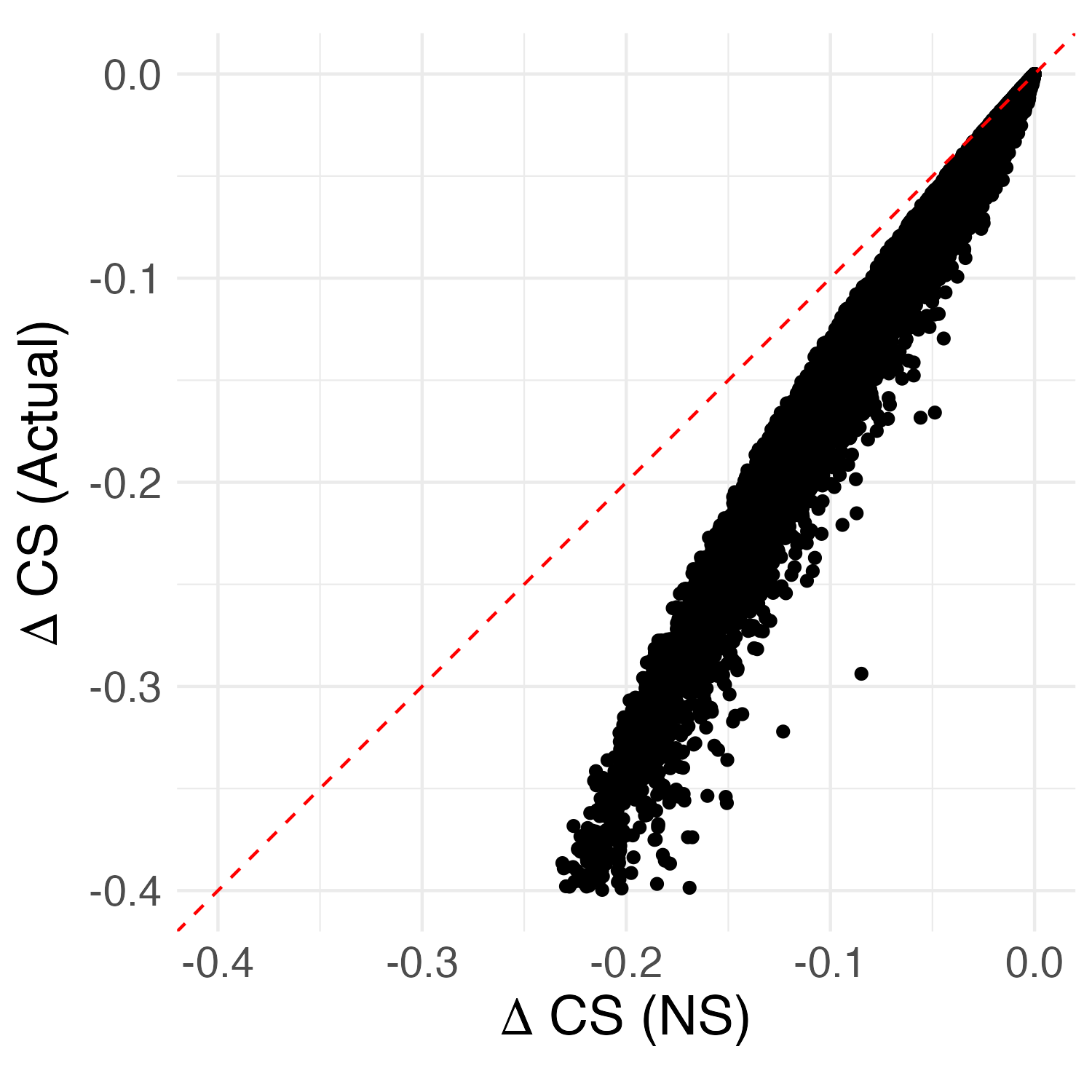}
  \caption{Nocke-Schutz (Logit)}
  \label{figure:consumer.surplus.simulation.logit.1}
\end{subfigure}%
\begin{subfigure}{.5\textwidth}
  \centering
  \includegraphics[width=.5\linewidth]{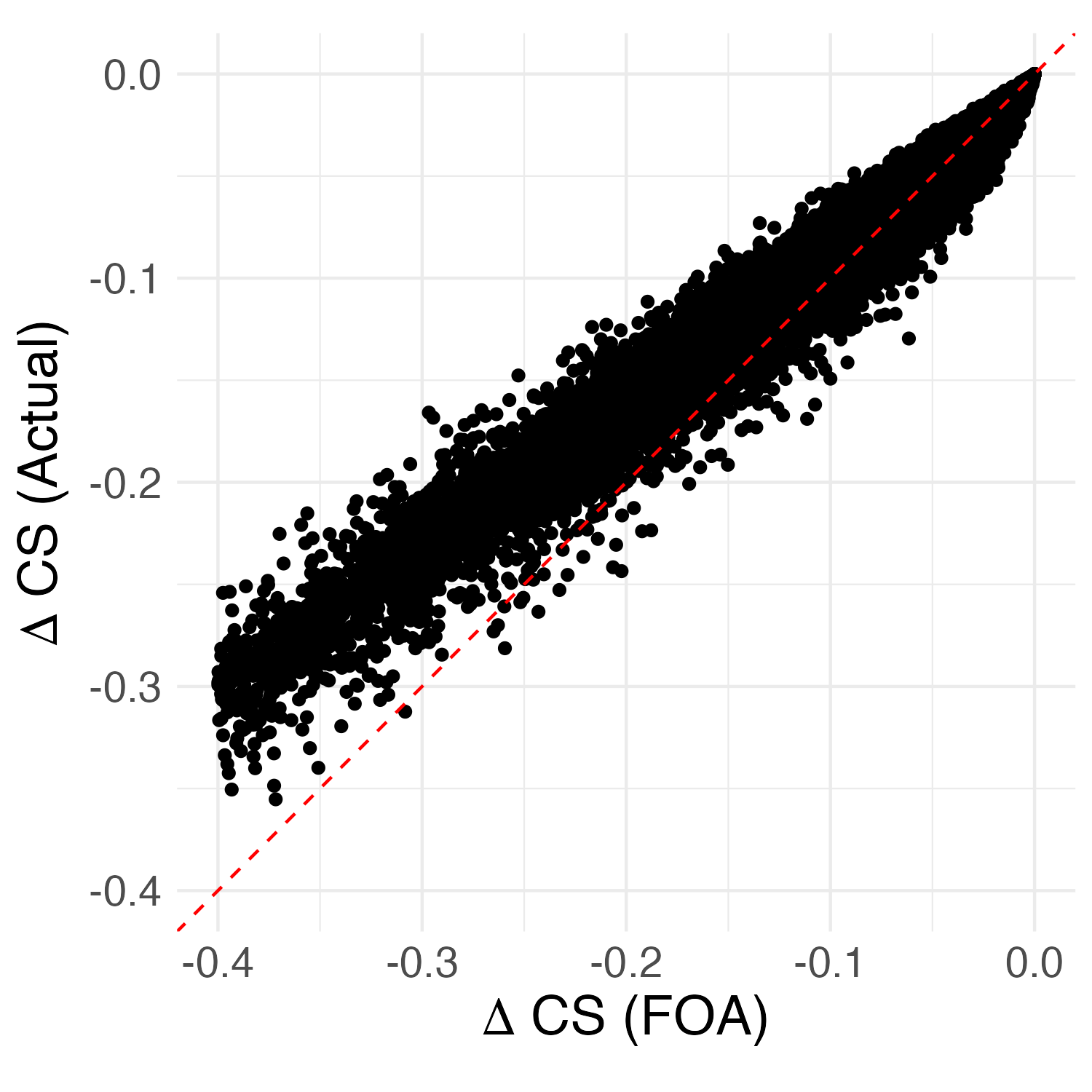}
  \caption{First-Order Approach (Logit)}
  \label{figure:consumer.surplus.simulation.logit.2}
\end{subfigure}
\vskip \baselineskip
\begin{subfigure}{.5\textwidth}
  \centering
  \includegraphics[width=.5\linewidth]{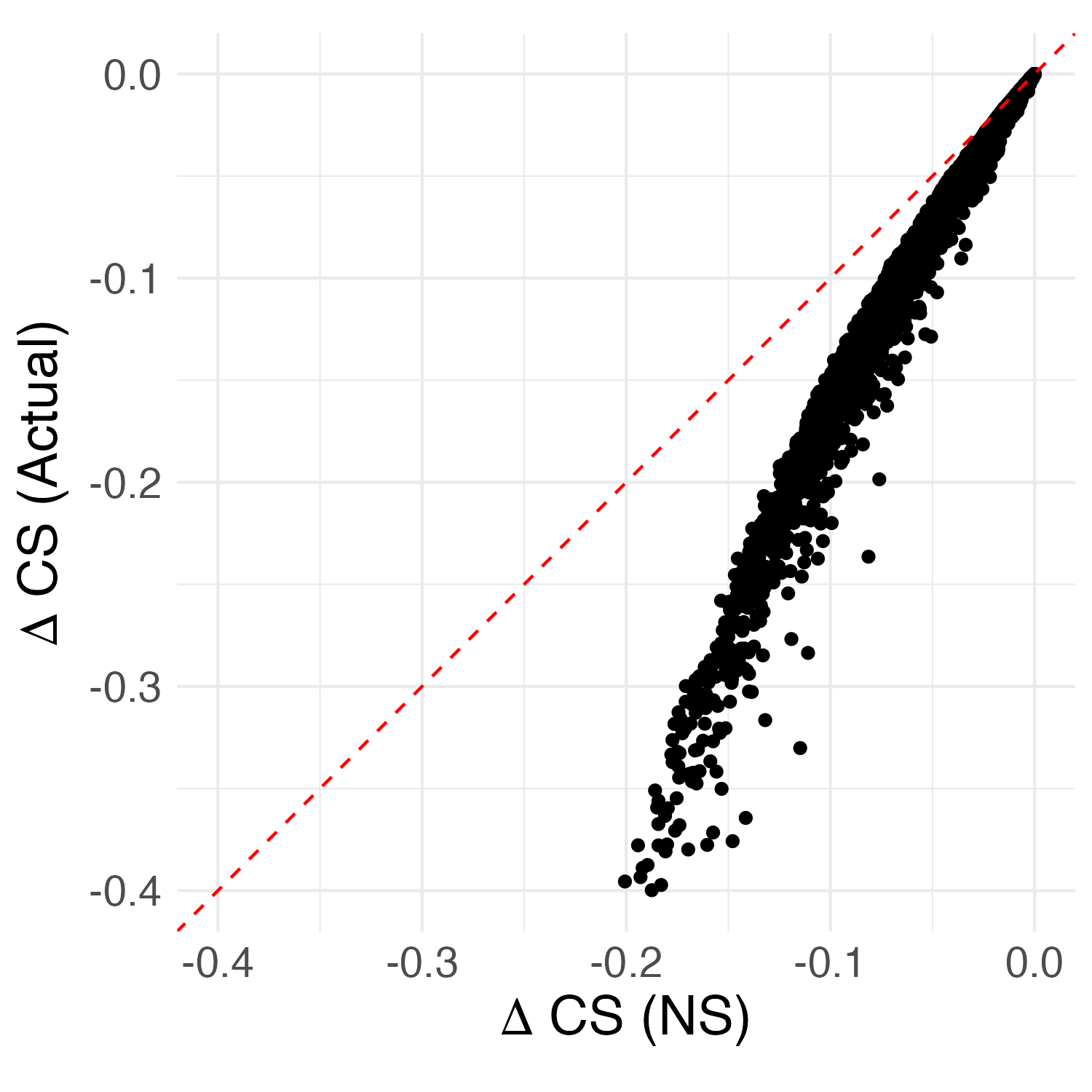}
  \caption{Nocke-Schutz (CES)}
  \label{figure:consumer.surplus.simulation.ces.1}
\end{subfigure}%
\begin{subfigure}{.5\textwidth}
  \centering
  \includegraphics[width=.5\linewidth]{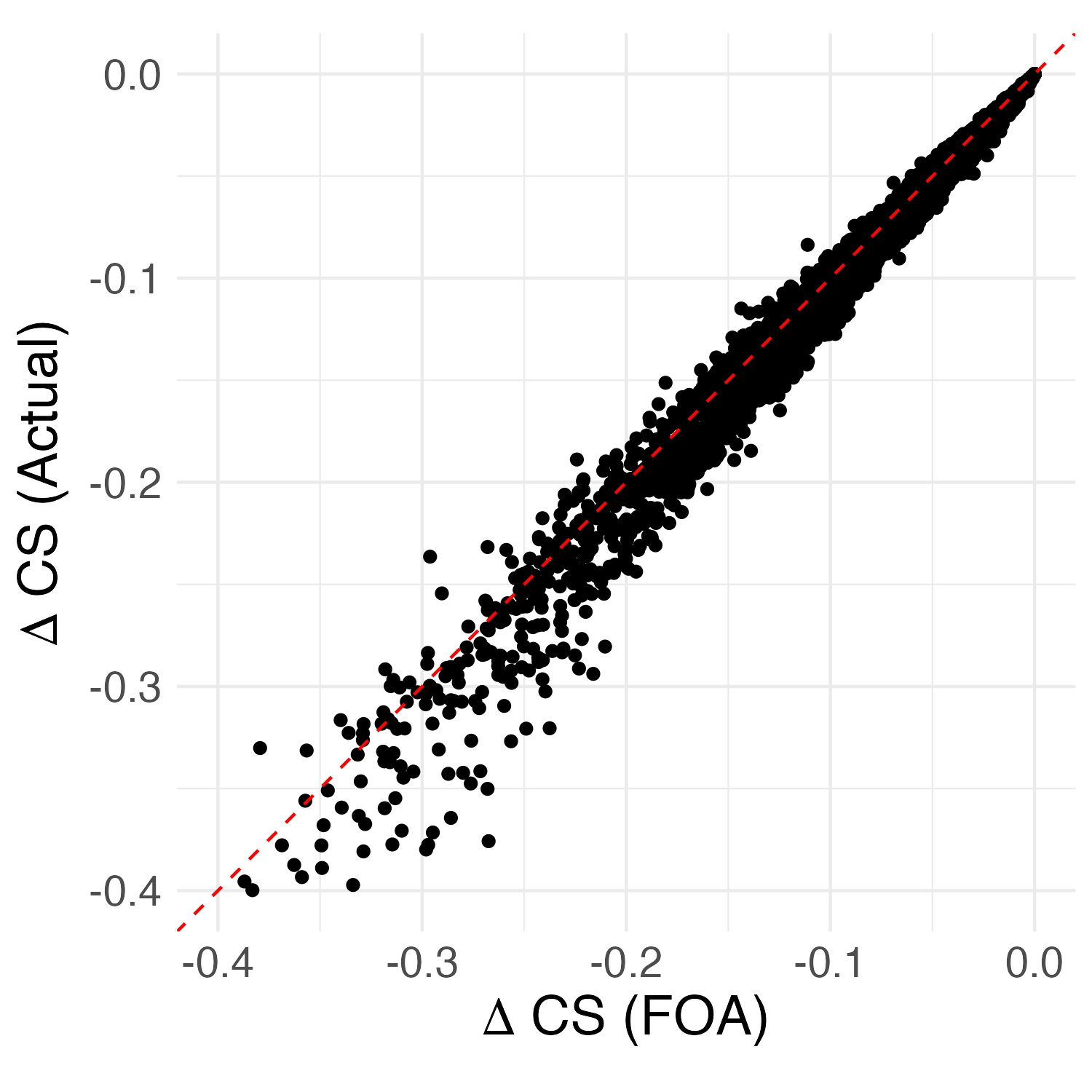}
  \caption{First-Order Approach (CES)}
  \label{figure:consumer.surplus.simulation.ces.2}
\end{subfigure}
\caption{Comparison of Nocke-Schutz and First-Order Approach}
\label{figure:consumer.surplus.simulation.comparison}
\floatfoot{\emph{Notes}: This figure compares the performance of the Nocke-Schutz formula $\Delta \mathit{CS}^\mathit{NS} = - (V_0 / \varphi) \Delta \mathit{HHI}_{AB}$ and my first-order approach-based formula $\Delta \mathit{CS}^\mathit{FOA} = - V_0 \rho_1 \rho_2 \Delta \mathit{HHI}_{AB}$ after setting $\rho_2 = 1$ in predicting the actual consumer welfare loss due to mergers.}
\end{figure}

Figure \ref{figure:consumer.surplus.percentage.deviation} reports the distribution of prediction error, defined as the percentage deviation between the predicted and actual change in consumer welfare.\footnote{Specifically, for each simulation, I calculate the prediction error as $\log (\Delta \mathit{CS}^\text{predicted} / \Delta \mathit{CS}^\text{actual})$.} For the logit (CES) case, the average errors are -36\% (-28\%) for the Nocke-Schutz approach and -4\% (-10\%) for the first-order approach, respectively. Overall, the first-order approach substantially improves the accuracy of linking $\Delta \mathit{HHI}$ to predicted changes in consumer welfare from mergers, offering a much closer fit than the Nocke-Schutz approximation.

\begin{figure}[htbp!]
\centering
\begin{subfigure}{.5\textwidth}
  \centering
  \includegraphics[width=.99\linewidth]{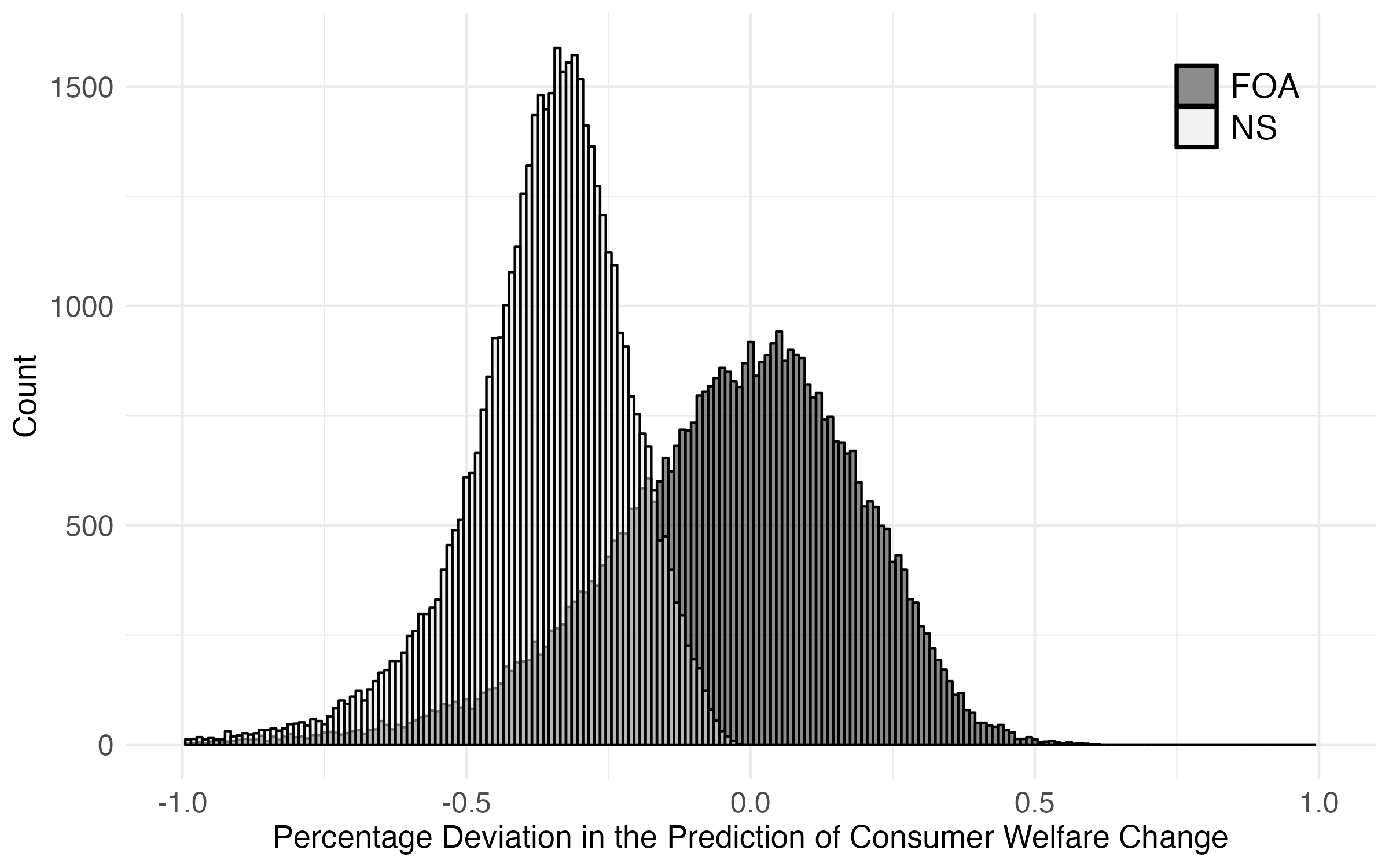}
  \caption{Logit}
  \label{figure:consumer.surplus.percentage.deviation.logit}
\end{subfigure}%
\begin{subfigure}{.5\textwidth}
  \centering
  \includegraphics[width=.99\linewidth]{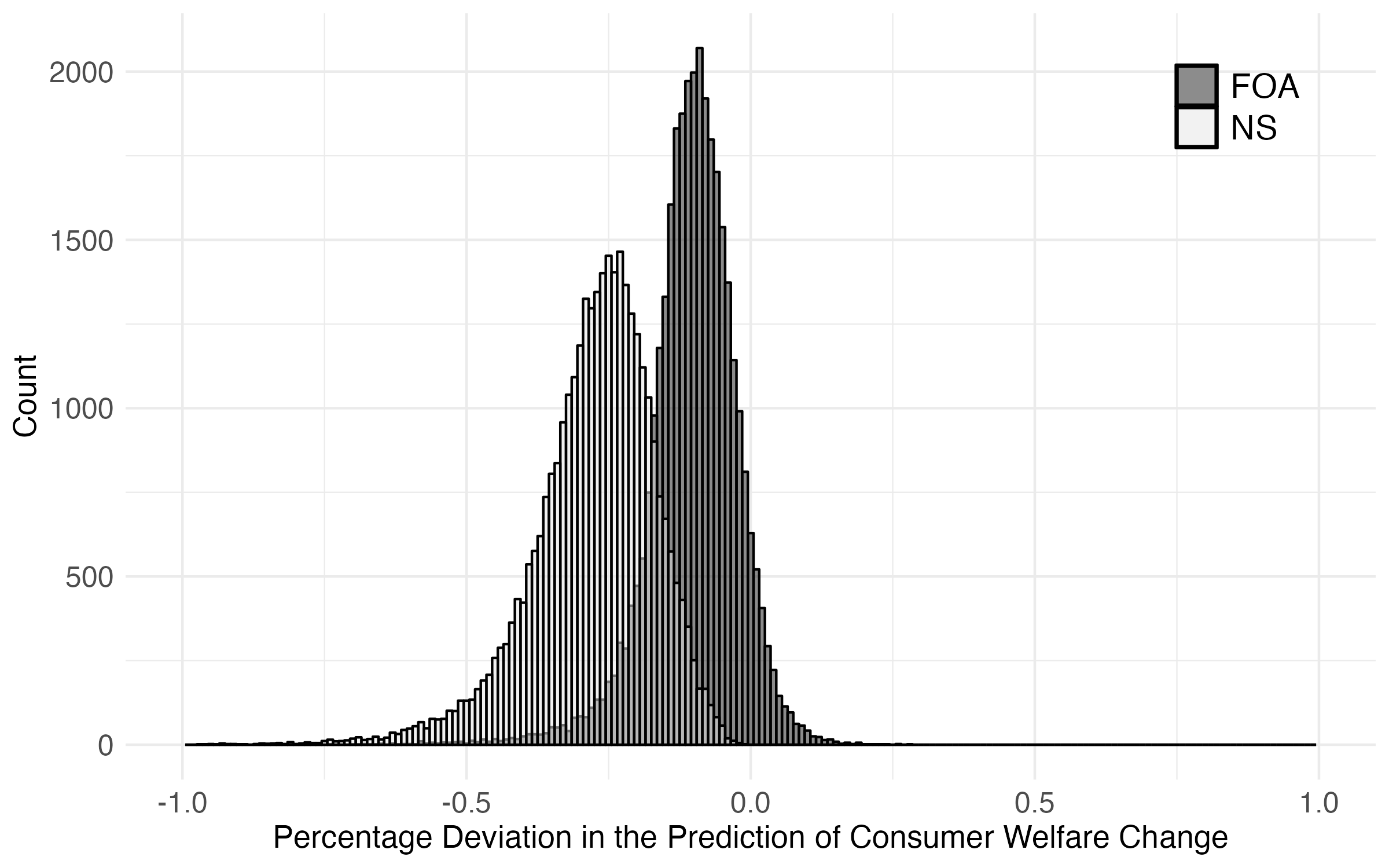}
  \caption{CES}
  \label{figure:consumer.surplus.percentage.deviation.logit}
\end{subfigure}
\caption{Distribution of Prediction Errors in the Nocke-Schutz and First-Order Approach}
\label{figure:consumer.surplus.percentage.deviation}
\floatfoot{\emph{Notes}: This figure plots the distribution of prediction errors of the Nocke-Schutz formula and my first-order approach-based formula. For each simulation draw, the prediction error is measured as $\log(\Delta \mathit{CS}^\text{predicted}/\Delta \mathit{CS}^\text{actual})$. }
\end{figure}
\section{An Empirical Example: \emph{Heinz/Beech-Nut} Merger \label{section:empirical.example}}

To demonstrate how my framework can be applied in practice, I consider the proposed merger between Heinz and Beech-Nut in the U.S. baby food market.\footnote{I choose this example because it aligns well with the theoretical framework of the paper, and firm-level market share data is readily available.} In 2000, H.J. Heinz announced plans to acquire Beech-Nut for \$185 million. At the time, the U.S. baby food market was valued between \$865 million and \$1 billion annually and was dominated by three firms: Gerber with a 65\% market share, Heinz with 17.4\%, and Beech-Nut with 15.4\% (measured by revenue). Had the merger proceeded, the industry would have effectively become a duopoly, with the combined Heinz-Beech-Nut entity controlling 98\% of the market \citep{chen2010evolution}.

The Federal Trade Commission challenged the merger under Section 7 of the Clayton Act, arguing that reducing the number of major competitors from three to two would significantly harm competition. Heinz defended the deal by claiming substantial merger-specific efficiencies. In particular, it argued that closing Beech-Nut's outdated plant and streamlining production and distribution would lower costs, enabling the merged firm to compete more effectively against Gerber. However, the D.C. Circuit ultimately rejected these arguments, ruling that the claimed efficiencies were not sufficient to offset the likely competitive harm. In response, Heinz abandoned the merger in 2001.

For tractability, I model Heinz and Beech-Nut as single-product firms facing constant elasticity of substitution (CES) demand and assume there are no merger-specific efficiencies.\footnote{Product-level share data are not publicly available.} To be conservative, I set the total market size to $Y = \$865$ million, corresponding to the lower bound of the estimated industry size. Based on the pre-merger market share, the change in concentration implied by the merger is $\Delta \mathit{HHI} = 0.0536$. Using this value, I compute estimates of consumer harm by $\Delta \mathit{CS} = -\rho \Delta \mathit{HHI}$, where $\rho = V_0 \rho_1 \rho_2$ is the proportionality coefficient for translating concentration changes to consumer surplus loss. I estimate $\rho_2$ along with the CES merger pass-through matrix without relying on $M \approx \varphi I$.

Table \ref{table:heinz.beech.nut} reports these estimates for different values of the CES price responsiveness parameter, $\sigma$.\footnote{Information on firms' margin, which is not publicly available, can produce an estimate of $\sigma$.} The results indicate that consumer harm would range from \$19.10 million to \$36.68 million annually, depending on the assumed level of consumer price sensitivity. As expected, harm is greater when consumers are less price sensitive.

\begin{table}[htbp!]
    \centering
    \caption{Annual Consumer Harm Calculation for the \emph{Heinz/Beech-Nut} Merger} \label{table:heinz.beech.nut}
    \begin{threeparttable}
    \begin{tabular}{ccccccccc} \toprule
        $\sigma$ & $\varphi$ & $1/\varphi$ &  $V_0$ & $\rho_1$ & $ \rho_2$ & $\rho$ & $\Delta \mathit{HHI}$ & $\Delta \mathit{CS}$ \\ \midrule
        1.5 & 3.00 & 0.33 & 1730.00 & 0.37 & 1.09 & 703.16 & 0.0536 & -\$37.68m \\
        2.0 & 2.00 & 0.50 & 865.00 & 0.59 & 1.04 & 531.37 & 0.0536 & -\$28.48m \\ 
        2.5 & 1.67 & 0.60 & 576.67 & 0.73 & 1.00 & 426.72 & 0.0536 & -\$22.87m \\
        3.0 & 1.50 & 0.67 & 432.00 & 0.84 & 0.98 & 356.39 & 0.0536 & -\$19.10m \\ \bottomrule
    \end{tabular}
    \begin{tablenotes}
        \item \footnotesize \emph{Notes}: $\rho_2$ is calculated with the pass-through matrix. Note that $\rho = V_0 \rho_1 \rho_2$ and $\Delta \mathit{CS} = -\rho \Delta \mathit{HHI}$.
    \end{tablenotes}
    \end{threeparttable}
\end{table}

Recall that as market shares approach zero, $\rho_1 \to 1/\varphi $ and $\rho_2 \to 1$. Table \ref{table:heinz.beech.nut} shows that while the approximation $\rho_2 \approx 1$ is generally reasonable, the approximation $\rho_1 \approx 1/\varphi$ can be quite inaccurate. This suggests that if an analyst wishes to adopt a simple, fast approach without computing the full merger pass-through matrix, setting $\rho_2 = 1$ is a defensible simplification; this finding is consistent with the simulation results reported in Section \eqref{section:comparison.to.Nocke.Shutz}.

\section{Conclusion \label{section:conclusion}}

This paper develops simple formulas linking changes in HHI to the consumer welfare effects of mergers in differentiated product markets with multi-product firms. Using a first-order approach, I show that merger-induced changes in consumer surplus are proportional to changes in HHI, with the proportionality coefficient determined by market size, the price responsiveness parameter, and the distribution of the merging firms’ shares. These results bridge concentration-based merger screening tools, commonly used by antitrust agencies, with the unilateral effects framework rooted in the standard Bertrand-Nash pricing model, providing a clearer theoretical foundation for interpreting changes in market concentration.

\part*{Appendix}
\appendix
\section{Proofs \label{section:proofs}}

\subsection{Proof of Lemma \ref{lemma:small.share.approximation.of.merger.pass.through.matrix}} \label{section:proof.of.lemma.1}

In the logit case, the merger pass-through matrix is defined as $M = - \left( \frac{\partial h(p^\text{pre})}{\partial p} \right)^{-1}$ where $h(p)$ is a function that characterizes the post-merger FOC such that $h(p^\text{post}) = 0$ but normalized to be quasilinear in marginal cost. By expressing each $\frac{\partial h_j(p)}{\partial p_k}$ with $j,k \in \mathcal{J}_A \cup \mathcal{J}_B$ as a function of quantity-based shares and the price sensitivity parameter $\alpha$, it can be verified that $\frac{\partial h_j}{\partial p_j} \to -1$ as the shares converge to zero, and $\frac{\partial h_j}{\partial p_k} \to 0$ for any pair $j \neq k$.\footnote{I characterize the merger pass-through matrix under the logit and CES demand assumptions in Online Appendix \ref{section:merger.pass.through.matrix.under.logit.and.ces}. } Thus, $\frac{\partial h}{\partial p} \to -I$, which gives $M = I$.
    
In the CES case, the merger pass-through matrix is defined as $M = - \left( \frac{\partial h(\tilde{p^\text{pre}})}{\partial \tilde{p}} \right)^{-1}$, where $\tilde{p}_j \equiv \log p_j$ and $h(\tilde{p})$ is a function that characterizes the post-merger FOC such that $h(\tilde{p}^\text{post}) = 0$ but normalized to be quasilinear in relative margins $m_j \equiv \frac{p_j - c_j}{p_j}$. Similar to the logit case, one can express each $\frac{\partial h_j(\tilde{p})}{\partial \tilde{p}_k}$ with $j , k \in \mathcal{J}_A \cup \mathcal{J}_B$ as functions of revenue-based shares and the price sensitivity parameter $\sigma$. It can be verified that $\frac{\partial h_j}{\partial \tilde{p}_j} \to - \frac{\sigma - 1}{\sigma}$ as the shares converge to zero, and $\frac{\partial h_j}{\partial \tilde{p}_k} \to 0$ for any pair $j \neq k$. Thus, $\frac{\partial h}{\partial \tilde{p}} \to - \left( \frac{\sigma -1}{\sigma} \right)I$, which gives $M \to \frac{\sigma}{\sigma - 1} I$. \qed

\subsection{Proof of Proposition \ref{proposition:consumer.surplus.first.order.approach} }\label{section:proof.of.proposition.1}

Before proceeding to the proof, I show how absolute (resp. relative) margins relate to firms' shares in the logit (resp. CES) case. In \citet{nocke2018multiproduct}, the authors define the ``$\iota$-markups'' as
\begin{equation}\label{equation:iota.markups}
    \mu_f \equiv 
    \begin{cases}
    \alpha (p_j - c_j),  & \forall j \in \mathcal{J}_f \quad \text{(MNL)} \\
    \sigma \frac{p_j - c_j}{p_j},  & \forall j \in \mathcal{J}_f \quad \text{(CES)}
    \end{cases}
    .
\end{equation}
The $\iota$-markups use the fact that with logit (resp. CES) demand, firms charge constant absolute (relative) margins across their products. \citet{nocke2023aggregative} equation (5) shows that
\begin{equation}\label{equation:iota.markup.and.firm.share}
    \mu_f = \frac{1}{1 - \alpha^* s_f}
\end{equation}
holds in equilibrium, where $\alpha^* = 1$ under logit and $\alpha^* = \frac{\sigma - 1}{\sigma}$ under CES.\footnote{Also see the derivation in the Online Appendix A of \citet{caradonna2024mergers}.} Combining \eqref{equation:iota.markups} and \eqref{equation:iota.markup.and.firm.share} implies 
\begin{equation}\label{equation:margin.in.logit}
    p_j - c_j = \frac{1}{\alpha (1-s_f^Q)}, \quad \forall j \in \mathcal{J}_f
\end{equation}
in the case of logit, and
\begin{equation}\label{equation:margin.in.ces}
    \frac{p_j - c_j}{p_j} = \frac{1}{1 + (1-s_f^R)(\sigma-1)}, \quad \forall j\in \mathcal{J}_f
\end{equation}
in the case of CES.

\subsubsection{Logit}

    Under logit demand, quantity diversion ratio is $D_{j \to k} = \frac{s_k^Q}{1-s_j^Q}$ and firms' absolute margins satisfy \eqref{equation:margin.in.logit}. Then I can write the upward pricing pressure of $j \in \mathcal{J}_A$ as 
    \[
    \mathit{UPP}_j = \sum_{k \in \mathcal{J}_B} (p_k - c_k) D_{j \to k} = \sum_{k \in \mathcal{J}_B} \frac{1}{\alpha (1-s_B^Q)} \frac{s_k^Q}{1-s_j^Q} = \frac{s_B^Q}{\alpha(1-s_B^Q)(1-s_j^Q)}.
    \]
    Thus, for an arbitrary product $j \in \mathcal{J}_A \cup \mathcal{J}_B$, I can express its upward pricing pressure as
    \[
    \mathit{UPP}_j = \mathbb{I}_{j \in \mathcal{J}_A} \frac{s_B^Q}{\alpha (1-s_B^Q)(1-s_j^Q)} + \mathbb{I}_{j \in \mathcal{J}_B} \frac{s_A^Q}{\alpha(1-s_A^Q)(1-s_j^Q)}.
    \]

    Next, given that the first order price effect on an arbitrary product $j$ is $\Delta p_j = \sum_{k \in \mathcal{J}_A \cup \mathcal{J}_B} M_{jk} \mathit{UPP}_k$, we have
    \begin{align*}
        \Delta \mathit{CS}_j & = -\Delta p_j \times q_j \\
        & = - \left(\sum_{k \in \mathcal{J}_A \cup \mathcal{J}_B} M_{jk} \mathit{UPP}_k \right)s_j^QN \\
        & = - \left( \sum_{k \in \mathcal{J}_A \cup \mathcal{J}_B } M_{jk} \left( \mathbb{I}_{k \in \mathcal{J}_A} \frac{s_B^Q}{\alpha (1-s_B^Q)(1-s_k^Q)} + \mathbb{I}_{k \in \mathcal{J}_B} \frac{s_A^Q}{\alpha (1-s_A^Q)(1-s_k^Q)} \right) \right) s_jN \\
        & = - \left( \frac{N}{\alpha} \right) \left( \frac{s_A^Q s_B^Q}{(1-s_A^Q)(1-s_B^Q)} \right) \sum_{k \in \mathcal{J}_A \cup \mathcal{J}_B} M_{jk} \frac{s_j^Q}{s_k^Q} \left( \mathbb{I}_{k \in \mathcal{J}_A} \frac{s_k^Q / (1-s_k^Q)}{s_A^Q/(1-s_A^Q)} + \mathbb{I}_{k \in \mathcal{J}_B} \frac{s_k^Q/(1-s_k^Q)}{s_B^Q/(1-s_B^Q)} \right).
    \end{align*}
    Finally, using $\Delta \mathit{CS} = \sum_{j \in \mathcal{J}_A \cup \mathcal{J}_B} \Delta \mathit{CS}_j$ gives
    \begin{align*}
        \Delta \mathit{CS} & = - \left( \frac{N}{\alpha} \right) \left( \frac{s_A^Q s_B^Q}{(1-s_A^Q)(1-s_B^Q)} \right) \sum_{j,k \in \mathcal{J}_A \cup \mathcal{J}_B} M_{jk} \frac{s_j^Q}{s_k^Q} \left( \mathbb{I}_{k \in \mathcal{J}_A} \frac{s_k^Q / (1-s_k^Q)}{s_A^Q/(1-s_A^Q)} + \mathbb{I}_{k \in \mathcal{J}_B} \frac{s_k^Q/(1-s_k^Q)}{s_B^Q/(1-s_B^Q)} \right) \\
        & = - \left( \frac{N}{\alpha} \right) \left( \frac{\Delta \mathit{HHI}^Q_{AB}}{(1-s_A^Q)(1-s_B^Q)} \right) \sum_{j,k \in \mathcal{J}_A \cup \mathcal{J}_B} M_{jk} \frac{s_j^Q}{s_k^Q} \left( \frac{1}{2} \mathbb{I}_{k \in \mathcal{J}_A} \frac{s_k^Q / (1-s_k^Q)}{s_A^Q/(1-s_A^Q)} + \frac{1}{2} \mathbb{I}_{k \in \mathcal{J}_B} \frac{s_k^Q/(1-s_k^Q)}{s_B^Q/(1-s_B^Q)} \right).
    \end{align*}
\qed

\subsubsection{CES}

Under a CES preference assumption, upward pricing pressure has a close relationship with \emph{revenue diversion ratios}, defined as $D_{j \to l}^R \equiv - (\partial R_l / \partial p_j)/(\partial R_j / \partial p_j)$.  \citet{koh2024merger} shows that for $j \neq k$,
\begin{equation}\label{equation:relationship.between.quantity.and.revenue.diversion.ratios}
(1 + \epsilon_{jj}^{-1}) D_{j \to k}^R = D_{j \to k} \frac{p_k}{p_j},    
\end{equation}
where $\epsilon_{jj} = (\partial q_j / \partial p_j) / (q_j / p_j)$ is product $j$'s own-price elasticity of demand. Plugging in \eqref{equation:relationship.between.quantity.and.revenue.diversion.ratios} into the GUPPI ($\mathit{GUPPI}_j \equiv \mathit{UPP}_j / p_j$) equation gives
\begin{equation}
    \mathit{GUPPI}_j = (1 + \epsilon_{jj}^{-1}) \sum_{l \in \mathcal{J}_B} m_l D_{j \to l}^R.
\end{equation}

I can simplify the $\mathit{GUPPI}_j$ expression as follows. Under CES demand, the revenue diversion ratio is $D_{j \to l}^R = \frac{s_l^R}{1-s_j^R}$ \citep{koh2024merger, caradonna2024mergers} and the relative margins satisfy \eqref{equation:margin.in.ces}. Using the fact that
\[
R_j = \frac{v_jp_j^{1-\sigma}}{1 + \sum_{l \in \mathcal{J}} v_l p_l^{1-\sigma}} Y,
\]
I can also show that the own-price elasticity of revenue is $\epsilon_{jj}^R = \frac{\partial s_j^R}{\partial p_j} \frac{p_j}{s_j^R} = -(1-s_j^R)(\sigma - 1).$ Then, since $\epsilon_{jj}^R = \epsilon_{jj} + 1$ \citep{koh2024merger}, I have
\[
1 + \epsilon_{jj}^{-1} = \frac{\epsilon_{jj}^R}{\epsilon_{jj}^R - 1} = \frac{(1-s_j^R)(\sigma - 1)}{1 + (1 - s_j^R)(\sigma - 1)}.
\]
Then, $\mathit{GUPPI}_j$ for $j \in \mathcal{J}_A$ simplifies to
\begin{align*}
    \mathit{GUPPI}_j & = (1 + \epsilon_{jj}^{-1})\sum_{k \in \mathcal{J}_B} m_k D_{j \to k}^R \\
    & = \left( \frac{(1-s_j^R)(\sigma - 1)}{1 + (1-s_j^R)(\sigma - 1)} \right) \sum_{k \in \mathcal{J}_B} \frac{1}{1 + (1-s_B^R)(\sigma - 1)} \frac{s_k^R}{1-s_j^R} \\ 
    & = \frac{(\sigma - 1) s_B^R}{(1 + (1-s_j^R)(\sigma - 1))(1 + (1-s_B^R)(\sigma - 1))}.
\end{align*}
In sum, $\mathit{GUPPI}_j$ for an arbitrary $j \in \mathcal{J}_A \cup \mathcal{J}_B$ can be rewritten as
\begin{align*}
    \mathit{GUPPI}_j & = \mathbb{I}_{j \in \mathcal{J}_A} \frac{(\sigma - 1) s_B^R}{(1 + (1-s_j^R)(\sigma-1))(1+(1-s_B^R)(\sigma-1))}  \\ &+ \mathbb{I}_{j \in \mathcal{J}_B} \frac{(\sigma-1)s_A^R}{(1 + (1-s_j^R)(\sigma - 1))(1 + (1-s_A^R)(\sigma-1))}.
\end{align*}

Next, letting $R_j = p_jq_j$,
\begin{align*}
    \Delta \mathit{CS}_j &= - \Delta p_j \times q_j \\
    & = - \frac{\Delta p_j}{p_j} \times R_j \\
    & = - \left(\sum_{k \in \mathcal{J}_A \cup \mathcal{J}_B} M_{jk} \mathit{GUPPI}_k \right) s_j^R Y \\
    & = - \left( \frac{Y}{\sigma -1} \right) \left( \frac{ \Delta \mathit{HHI}_{AB}^R}{(\frac{\sigma}{\sigma -1} - s_A^R)(\frac{\sigma}{\sigma -1} - s_B^R) } \right) \times \\ & \sum_{k \in \mathcal{J}_A \cup \mathcal{J}_B} M_{jk} \frac{s_j^R}{s_k^R} \left( \frac{1}{2} \mathbb{I}_{k \in \mathcal{J}_A} \frac{s_k^R/(1 + (1-s_k^R)(\sigma -1))}{s_A^R/(1 + (1-s_A^R)(\sigma-1))} + \frac{1}{2} \mathbb{I}_{k \in \mathcal{J}_B} \frac{s_k^R/(1 + (1-s_k^R)(\sigma -1))}{s_B^R/(1 + (1-s_B^R)(\sigma-1))} \right)
\end{align*}
Finally, since $\Delta \mathit{CS} = \sum_{j \in \mathcal{J}_{A} \cup \mathcal{J}_B} \Delta \mathit{CS}_j$, summing over the above expression over $j \in \mathcal{J}_A \cup \mathcal{J}_B$ gives the desired expression. \qed

\subsection{Proof of Proposition \ref{proposition:cross.firm.scaling.factor.1} \label{section:proof.of.proposition.2}}
First, that $\rho_1 = \frac{\varphi}{(\varphi -s_A)(\varphi - s_B)} \to \frac{1}{\varphi}$ as $(s_A,s_B) \to (0,0)$ is straightforward from the definition of $\rho_1$. Second, the positive monotonicity of $\rho_1$ with respect to the merging firms is trivial. Finally, verifying that $\rho_1$ is strictly convex in $(s_A,s_B)$ if $s_A < \varphi$ and $s_B < \varphi$ is also straightforward and can be checked by examining the positive definiteness of the Hessian, which is 
\[
H = 
\begin{pmatrix}
    \frac{2\varphi}{(\varphi - s_A)^3(\varphi - s_B)} & \frac{\varphi}{(\varphi - s_A)^2(\varphi - s_B)^2} \\ \frac{\varphi}{(\varphi - s_A)^2(\varphi - s_B)^2} & \frac{2 \varphi}{(\varphi - s_A)(\varphi - s_B)^3}
\end{pmatrix}
.
\]
\qed

\subsection{Proof of Proposition \ref{proposition:small.share.approximation.of.rho.2}} \label{section:proof.of.proposition.3}

To capture the assumption that the market shares are approaching zero, I assume merging firms' product-level shares are approaching zero at the same rate. Let $s_j = \gamma_j s_f$, where $\gamma_j > 0$ is a constant such that $\sum_{j \in \mathcal{J}_f} \gamma_j = 1$. Thus, I can consider the behavior of $\rho_2$ as $s_f \to 0$ for $f = A,B$. I also assume that the two firms' shares are approaching zero at the same rate. Then, for each $l \in \mathcal{J}_A \cup \mathcal{J}_B$, I have
\[
\frac{\frac{s_l}{1-s_l}}{\frac{s_f}{1-s_f}}  = \frac{\frac{\gamma_l s_f}{1-\gamma_l s_f}}{\frac{s_f}{1-s_f}} = \frac{\gamma_l (1-s_f)}{1 - \gamma_l s_f} \overset{s_f \downarrow 0}{\longrightarrow} \gamma_l.
\]
Recalling from Lemma \ref{lemma:small.share.approximation.of.merger.pass.through.matrix} that $M \to \varphi I$ as the shares approach zero, I have
\[
\begin{split}
    \rho_2 & = \frac{1}{\varphi} \sum_{j, l \in \mathcal{J}_A \cup \mathcal{J}_B} M_{jl} \frac{s_j}{s_l} \left( \frac{1}{2} \mathbb{I}_{l \in \mathcal{J}_A} \frac{s_l/(\varphi - s_l)}{s_A/(\varphi - s_A)} + \frac{1}{2} \mathbb{I}_{l \in \mathcal{J}_B} \frac{s_l/(\varphi - s_l)}{s_B / (\varphi - s_B)} \right) \\
    & \to \frac{1}{\varphi} \sum_{l \in \mathcal{J}_A \cup \mathcal{J}_B} \varphi \left( \frac{1}{2} \mathbb{I}_{l \in \mathcal{J}_A } \gamma_l + \frac{1}{2} \mathbb{I}_{l \in \mathcal{J}_B} \gamma_l \right) = \sum_{l \in \mathcal{J}_A \cup \mathcal{J}_B} \frac{\gamma_l}{2}  = 1,
\end{split}
\]
where the last equality follows from having $\sum_{l \in \mathcal{J}_A}\gamma_l = 1$ and $\sum_{l \in \mathcal{J}_B} \gamma_l = 1$. \qed

\section{Simulation \label{section:simulation}}
I follow the simulation steps in \citet{miller2017upward} with a slight modification. The overall steps are described as follows.
\begin{enumerate}
    \item Set $k=6$ as the number of firms. Assume all pre-merger prices are equal to one.\footnote{Setting all pre-merger prices to one allows us to equate the absolute change and the relative change in price.} Draw a random length-$(k+1)$ vector of shares from the Dirichlet distribution with parameter $\theta = (1,...,1) \in \mathbb{R}^7$. The shares represent quantity shares in the case of MNL and revenue shares in the case of CES. Note that outside option is included. Draw relative margin $m_1$ from the uniform distribution with support $[0.3, 0.6]$. I assume all firms' pre-merger prices are equal to 1.

    \item Using the observed shares and margin of firm 1, identify (calibrate) the price responsiveness parameter ($\alpha$ for MNL and $\sigma$ for CES) and the pre-merger values of $\{\mu_f\}_{f \in \mathcal{F}}$, $\{T_f\}_{f \in \mathcal{F}}$, and $H$ as follows. (See \citet{caradonna2024mergers} for derivations.) 
    \begin{itemize}
        \item In the MNL case, I use
        \begin{align}
            H & \leftarrow \frac{1}{1 - \sum_{f \in \mathcal{F}} s_f} \\
            \mu_f & \leftarrow \frac{1}{1 - s_f}, \quad \forall f \in \mathcal{F} \\
            T_f & \leftarrow H s_f \exp\left( \frac{1}{1-s_f} \right), \quad \forall f \in \mathcal{F} \\
            \alpha & \leftarrow \mu_1 / (p_1 - c_1) = \mu_1 / m_1
        \end{align}
        \item In the CES case, I use
        \begin{align}
            H & \leftarrow \frac{1}{1 - \sum_{f \in \mathcal{F}} s_f^R} \\
            \sigma & \leftarrow \left( \frac{1}{m_1} - 1 \right) \left( \frac{1}{1-s_1^R} \right) + 1 \\
            \mu_f & \leftarrow \frac{1}{1 - \frac{\sigma -1 }{\sigma} s_f^R}, \quad \forall f \in \mathcal{F} \\
            T_f & \leftarrow s_f^R H \left(1 - \frac{\mu_f}{\sigma} \right)^{1-\sigma}, \quad \forall f \in \mathcal{F}
        \end{align}
    \end{itemize}
    I also compute the upward pricing pressure using pre-merger values of margins and shares.

    \item I assume Firms 1 and 2 merge without synergy. Let $T_M \equiv T_1 + T_2$ be the type of the merged firm. I solve for the post-merger equilibrium using the system of equations described in Online Appendix \ref{section:review.of.aggregative.games.framework}.

    \item Using the pre-merger and post-merger equilibrium values, I calculate the actual price increases and changes in consumer surplus. The change in consumer surplus is calculated after normalizing $V_0 = 1$ in both logit and CES cases.

    \item I repeat the above $50,000$ times. I discard observations that fail to find the post-merger equilibrium due to numerical issues.\footnote{Although using multiple starting points can improve the chance of convergence, I instead chose to increase the number of simulation experiments.} 
\end{enumerate}

\section{Misspecification \label{section:misspecification}}

Considering demand function misspecification is crucial in merger analysis because inaccurate assumptions about how consumers substitute between products can bias estimates of diversion ratios, pass-through rates, and price elasticities. This not only distorts predictions of post-merger prices but also misstates changes in consumer surplus, which depend on the full shape of the demand curve. As a result, regulators may mistakenly approve harmful mergers, block beneficial ones, or design ineffective remedies, underscoring the importance of testing alternative demand models to ensure credible and robust welfare assessments.

I briefly discuss the implications of model misspecification in the context of the simulation exercise presented in Section \ref{section:comparison.to.Nocke.Shutz}. The simulation setting is convenient because pre-merger prices are normalized to one, allowing me to treat quantity shares and revenue shares as equivalent. Recall that under the single-agent assumption and the approximation $M \approx \varphi I$, the consumer welfare effect of a merger can be expressed as $\Delta \mathit{CS} \approx - V_0 \times r(\varphi) \times \Delta \mathit{HHI}_{AB}$, where $r(\varphi) \equiv \frac{\varphi}{(\varphi - s_A)(\varphi - s_B)}$. Note that $r(\varphi)$ is strictly decreasing in $\varphi$, as taking the derivative gives $r'(\varphi) = \frac{s_A s_B - \varphi^2}{(\varphi - s_A)^2(\varphi -s_B)^2} < 0$.

Since $\varphi = 1$ under the multinomial logit model, $\varphi = \frac{\sigma}{\sigma - 1} > 1$ under the CES model, and $r(\varphi)$ decreases with $\varphi$, it follows that $\vert \Delta \mathit{CS}^\mathit{MNL} \vert > \vert \Delta \mathit{CS}^\mathit{CES} \vert $. In other words, applying the MNL formula yields a larger predicted merger harm than using the CES formula. This result aligns with the earlier finding that the first-order approach with CES demand tends to underpredict merger price and welfare effects relative to MNL.

Figure \ref{figure:misspecification} illustrates this relationship by plotting predicted $\Delta \mathit{CS}^\mathit{MNL}$ against $\Delta \mathit{CS}^\mathit{CES}$ (with $V_0 = 1$). Overall, the MNL formula produces systematically higher welfare loss predictions. While the relative rankings of mergers are often preserved across the two models, they are not always identical. Since the gap between the two approximations is zero when $\varphi = 1$ and widens as $\varphi$ increases, we can infer that whether the rankings remain consistent depends on the estimated value of CES price responsiveness parameter $\sigma$.

\begin{figure}[htbp!]
\centering
\begin{subfigure}{.5\textwidth}
  \centering
  \includegraphics[width=.5\linewidth]{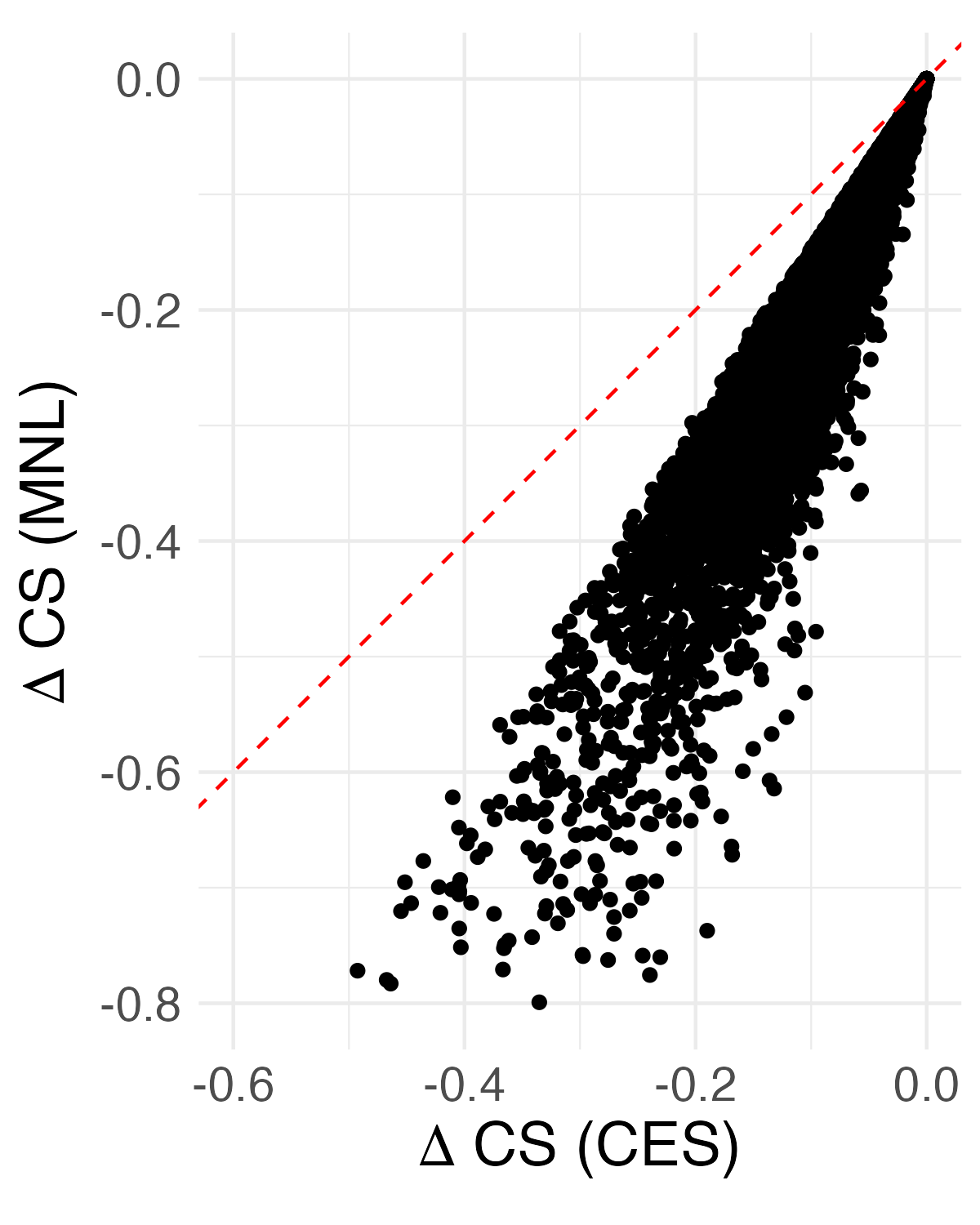}
  \caption{When DGP is Logit}
  \label{figure:misspecification.under.logit}
\end{subfigure}%
\begin{subfigure}{.5\textwidth}
  \centering
  \includegraphics[width=.5\linewidth]{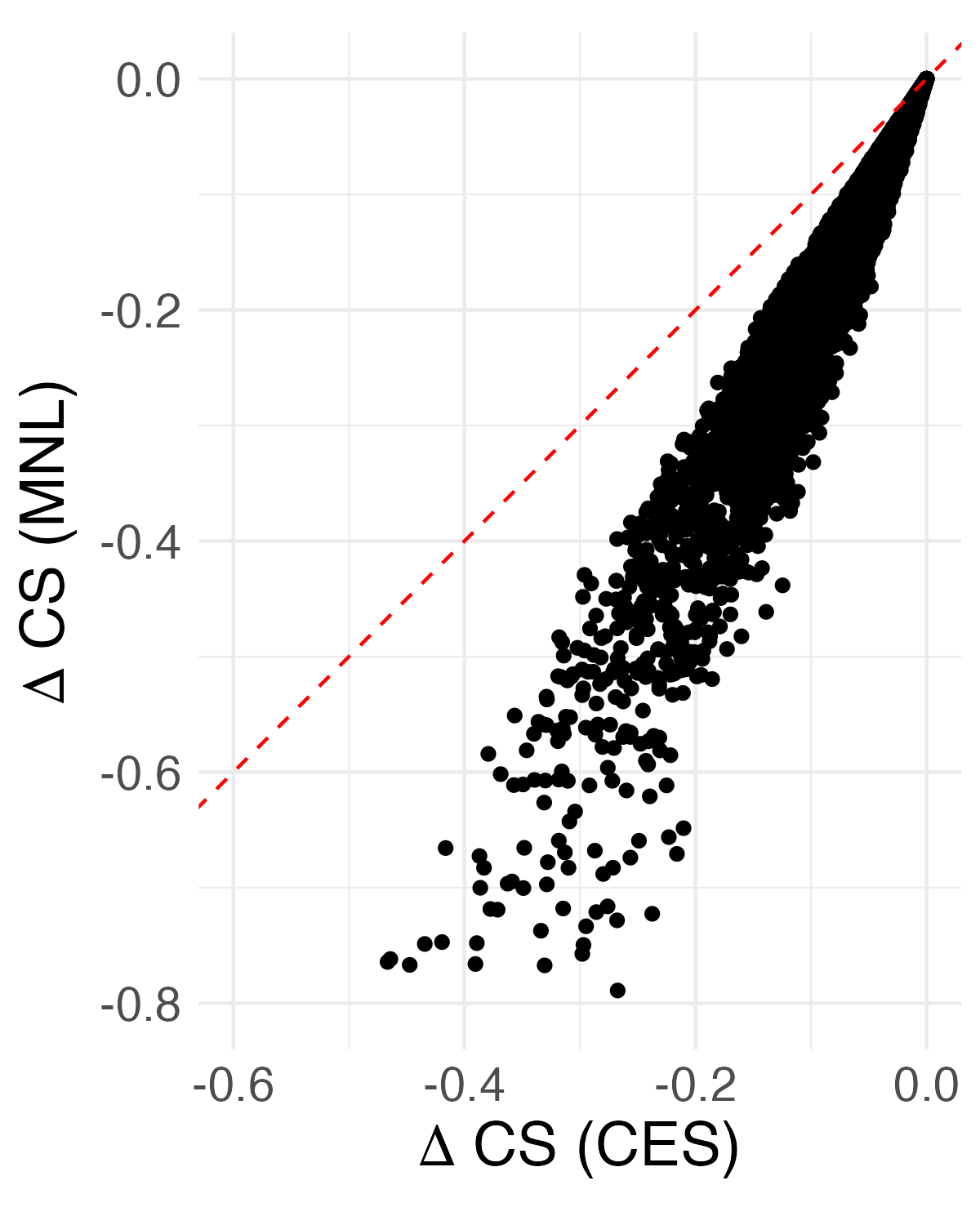}
  \caption{When DGP is CES}
  \label{figure:misspecification.under.ces}
\end{subfigure}
\caption{The Impact of Demand Misspecification on Prediction of Merger Welfare Effects}
\label{figure:misspecification}
\end{figure}

\singlespacing
\bibliographystyle{econ}
\bibliography{references}
\doublespacing

\clearpage

\setcounter{page}{1}  
\renewcommand{\thepage}{A\arabic{page}}  
\setcounter{section}{0} 

\begin{center}
    \LARGE Online Appendix for \\
    \LARGE{Concentration-Based Inference for Evaluating Horizontal Mergers} \\
    \vspace{1cm}
    \normalsize Paul S. Koh \\
    \normalsize September 20, 2025
\end{center}

\section{Review of Aggregative Games Framework \label{section:review.of.aggregative.games.framework}}

I review \citet{nocke2018multiproduct}'s aggregative games framework, which I use as a framework for the simulation exercises described in the paper.

\subsection{Model Primitives}
I consider \citet{nocke2018multiproduct}'s aggregative games framework with CES/MNL demand.\footnote{I also refer the readers to the appendix of \citet{caradonna2024mergers} for a practitioner's guide to the aggregation games framework of \citet{nocke2018multiproduct}.} A multiproduct-firm oligopoly model is specified as a tuple 
\begin{equation}\label{equation:model.primitives}
    \langle \mathcal{J}, \mathcal{F}, (c_j)_{j \in \mathcal{J}}, y, V_0, H_0, (h_j)_{j \in \mathcal{J}} \rangle.
\end{equation}
$\mathcal{J}$ is a finite set of products; $\mathcal{F}$ is a finite set of firms that form a partition over $\mathcal{J}$ to represent product ownership; $c_j$ is the marginal cost of producing product $j \in \mathcal{J}$; $y$ is the representative consumer's income; $V_0 > 0$ is the market size parameter; $H_0 \geq 0$ is a baseline utility parameter\footnote{$\log H_0$ is interpreted as the value of the outside option. Typical empirical models set $H_0 = 1$. }; finally, $h_j: \mathbb{R}_{++} \to \mathbb{R}$ is defined as
\[
h_j(p_j) = 
\begin{cases}
\exp(v_j - \alpha p_j) & \text{(MNL)} \\
v_j p_j^{1-\sigma} & \text{(CES)}
\end{cases}
,
\]
where $\alpha > 0$ and $\sigma > 1$ are the price responsiveness parameters that determine the price elasticities and the substitutability of products.

Model primitives in \eqref{equation:model.primitives} determine consumer demand functions as follows. The representative consumer's quasilinear indirect utility function is $y + V(p)$, where $V(p) \equiv V_0 \log H(p)$ and $H(p) \equiv H_0 + \sum_{j \in \mathcal{J}} h_j(p_j)$. The function $V(p)$ measures consumer welfare at price $p$.  \citet{nocke2018multiproduct} refers to $H = H(p)$ as the industry (market) aggregator. Applying Roy's identity to the indirect utility function gives the demand function:
\begin{equation}\label{equation:roys.identity.demand.function}
    q_j(p) = V_0 \frac{-h_j'(p_j)}{H(p)}.
\end{equation}

 Demand functions in \eqref{equation:demand.functions} are consistent with \eqref{equation:roys.identity.demand.function} if $H_0 = 1$ and $V_0$ is given by \eqref{equation:monetary.scaling.factor}.

\subsection{Equilibrium Characterization}

\citet{nocke2018multiproduct} shows that type aggregation property of logit and CES demand systems can be used to summarize the equilibrium. The model primitives determine each firm's type as
\begin{equation}
    T_f \equiv 
    \begin{cases}
    \sum_{l \in \mathcal{J}_f} \exp(v_l - \alpha c_l) & \text{(MNL)} \\
    \sum_{l \in \mathcal{J}_f} v_l c_l^{1-\sigma} & \text{(CES)}
    \end{cases}
    .
\end{equation}
With these demand systems, each firm finds it optimal to apply the same markup to all of its products. To characterize the Bertrand equilibrium, it is convenient to define ``$\iota$-markups'' as
\begin{equation}
    \mu_f \equiv 
    \begin{cases}
        \alpha (p_j - c_j) & \forall j \in \mathcal{J}_f \quad \text{(MNL)} \\
        \sigma \frac{p_j - c_j}{p_j} & \forall j \in \mathcal{J}_f \quad \text{(CES)}
    \end{cases}
    .
\end{equation}
The $\iota$-markups are proportional to the actual markups, either in levels (MNL) or percentages relative to price (CES). The market aggregator $H$ is defined as
\begin{equation}
    H \equiv 
    \begin{cases}
        1 + \sum_{j \in \mathcal{J}} \exp(v_j - \alpha p_j) & \text{(MNL)} \\
        1 + \sum_{j \in \mathcal{J}} v_j p_j^{1-\sigma} & \text{(CES)} 
    \end{cases}
    .
\end{equation}

Let $s^f = \sum_{j \in \mathcal{J}_f} s_j$ be the market share of firm $f$. \citet{nocke2018multiproduct} shows that the Bertrand equilibrium can be characterized as $(\{\mu_f\}_{f \in \mathcal{F}}, \{s_f\}_{f \in \mathcal{F}}, H)$ (a vector of $\iota$-markup, firm-level market shares, and a market aggregator) satisfying the following system of equations:
\begin{align}
    1 & = 
    \begin{cases}
        \mu_f \left( 1 - \frac{T_f}{H} \exp(-\mu_f) \right)  & \text{(MNL)} \\
        \mu_f \left( 1 - \frac{\sigma - 1}{\sigma} \frac{T_f}{H} (1 - \frac{\mu_f}{\sigma})^{\sigma - 1} \right) & \text{(CES)}
    \end{cases}
    , \quad \forall f \in \mathcal{F},
    \\
    s_f & =
    \begin{cases}
        \frac{T_f}{H} \exp \left(- \mu_f \right) & \text{(MNL)} \\
        \frac{T_f}{H} \left( 1- \frac{1}{\sigma} \mu_f \right)^{\sigma - 1} & \text{(CES)}
    \end{cases}
    , \quad \forall f \in \mathcal{F},
    \\
    1 & = \frac{1}{H} + \sum_{f \in \mathcal{F}} s_f .
\end{align}

A unique solution to this system of equations is guaranteed to exist. The equilibrium price of each product is determined by the equilibrium value of $\mu_f$. Firm-level profits can be expressed as
\begin{equation}
    \Pi_f = V_0(\mu_f - 1).
\end{equation}
The consumer surplus is determined by the value of the market aggregator:
\begin{equation}
    \mathit{CS}(H) = V_0 \log(H).
\end{equation}

\section{Merger Pass-Through Matrix Under Logit and CES \label{section:merger.pass.through.matrix.under.logit.and.ces}}

\subsection{Logit}

This section characterizes the pass-through matrix under the logit assumption. Recall that the merger pass-through matrix is defined as $M \equiv - \left(\frac{\partial h(p^\text{pre})}{\partial p} \right)^{-1} $, where $h(p^\text{post}) = 0$ characterizes the post-merger FOC; each $h_j(p)$ is normalized to be quasilinear in marginal cost. Let the merged firm's profit function be $\sum_{l \in \mathcal{J}_A \cup \mathcal{J}_B} (p_l - c_l) q_l$. For each $j \in \mathcal{J}_A \cup \mathcal{J}_B$, 
\[
\begin{split}
h_j(p) & = - q_j \left( \frac{\partial q_j}{\partial p_j} \right)^{-1} - (p_j - c_j) - \sum_{l \in \mathcal{J}_A \cup \mathcal{J}_B \backslash j} (p_l - c_l) \left(\frac{\partial q_l}{\partial p_j} \right) \left( \frac{\partial q_j}{\partial p_j} \right)^{-1}    \\
& = - s_j \left( \frac{\partial s_j}{\partial p_j} \right)^{-1} - (p_j - c_j) - \sum_{l \in \mathcal{J}_A \cup \mathcal{J}_B \backslash j} (p_l - c_l) \left(\frac{\partial s_l}{\partial p_j} \right) \left( \frac{\partial s_j}{\partial p_j} \right)^{-1}    \\
& = -s_j \left( \frac{-1}{\alpha s_j(1-s_j)} \right) - (p_j - c_j) - \sum_{l \in \mathcal{J}_A \cup \mathcal{J}_B \backslash j} (p_l - c_l) \left( \alpha s_j s_l \right) \left( \frac{-1}{\alpha s_j (1-s_j)} \right) \\
& = \frac{1}{\alpha (1-s_j)} - (p_j - c_l) + \sum_{l \in \mathcal{J}_A \cup \mathcal{J}_B \backslash j} (p_l - c_l) \frac{s_l}{1-s_j}.
\end{split}
\]
Note that I use $q_j = s_j N$ and $\frac{\partial s_j}{\partial p_k} = -\alpha (\mathbb{I}\{j=k\} - s_j)s_k$ (so that $\frac{\partial s_j}{\partial p_j} = -\alpha s_j(1-s_j)$ and $\frac{\partial s_k}{\partial p_j} = \alpha s_j s_k$ if $j \neq k$). 

The next step is to characterize $\frac{\partial h_j}{\partial p_k}$ for $j,k \in \mathcal{J}_A \cup \mathcal{J}_B$ so that I can construct the $(\mathcal{J}_A \cup \mathcal{J}_B) \times (\mathcal{J}_A \cup \mathcal{J}_B)$ matrix $\frac{\partial h(p)}{\partial p}$. First, since $\frac{d}{dx}\left( \frac{1}{1-x} \right) = \frac{1}{(1-x)^2}$, $\frac{\partial (p_j - c_j)}{\partial p_j} = 1$, and $\frac{\partial (p_l - c_l)}{\partial p_j} = 0$ if $l \neq j$,
\[
\begin{split}
\frac{\partial h_j}{\partial p_j} & = \frac{1}{\alpha} \frac{1}{(1-s_j)^2} \frac{\partial s_j}{\partial p_j} - 1 + \sum_{l \in \mathcal{J}_A \cup \mathcal{J}_B \backslash j} (p_l - c_l) \left( \frac{\partial s_l}{\partial p_j} \frac{1}{1-s_j} + s_l \frac{1}{(1-s_j)^2} \frac{\partial s_j}{\partial p_j} \right) \\
& = \frac{1}{\alpha} \frac{1}{(1-s_j)^2} (-\alpha s_j (1-s_j)) - 1 + \sum_{l \in \mathcal{J}_A \cup \mathcal{J}_B} (p_l - c_l) \left( \frac{\alpha s_l s_j }{1-s_j} + s_l \frac{1}{(1-s_j)^2} (-\alpha s_j (1-s_j)) \right) \\
& = - \frac{s_j}{1-s_j} - 1  \\
& = \frac{-1}{1-s_j}.
\end{split}
\]
Next, for $k \neq j$, 
\[
\begin{split}
    \frac{\partial h_j}{\partial p_k} & = \frac{1}{\alpha} \frac{1}{(1-s_j)^2} \frac{\partial s_j}{\partial p_k} + \sum_{l \in \mathcal{J}_A \cup \mathcal{J}_B \backslash j} \left( \frac{\partial (p_l - c_l)}{\partial p_k} s_l \frac{1}{1-s_j} + (p_l - c_l) \frac{\partial s_l}{\partial p_k} \frac{1}{1-s_j} + (p_l - c_l) s_l \frac{1}{(1-s_j)^2} \frac{\partial s_j}{\partial p_k} \right) \\
    & = \frac{s_j s_k}{(1-s_j)^2} + \sum_{l \in \mathcal{J}_A \cup \mathcal{J}_B \backslash j} \left( \mathbb{I}\{l = k\} \frac{s_l}{1-s_j} + (p_l - c_l) (-\alpha (\mathbb{I}\{l=k\} - s_k)s_l) \frac{1}{1-s_j} + (p_l - c_l) \frac{ \alpha s_l s_j s_k}{(1-s_j)^2} \right).
\end{split}
\]
The pre-merger margin for each product $j$ owned by firm $f$ can be calculated as $(p_j - c_j) = \frac{1}{\alpha(1-s_f)}$. Thus, the matrix $\frac{\partial h(p^\text{pre})}{\partial p}$ and thus the merger pass-through matrix can be calculated from the pre-merger shares $(s_j)_{j \in \mathcal{J}_A \cup \mathcal{J}_B}$ and the price responsiveness parameter $\alpha$.

In the special case of single-product firms, we can calculate the pass-through matrix as
    \[
    M = \frac{(1-s_A)^3(1-s_B)^3}{(1-s_A-s_B)(1-s_A-s_B+2s_As_B)} 
    \begin{bmatrix}
        \frac{1}{1-s_B} & \frac{s_As_B}{(1-s_A)^2(1-s_B)} \\
        \frac{s_A s_B}{(1-s_A)(1-s_B)^2} & \frac{1}{1-s_A}
    \end{bmatrix}
    .
    \]

\subsection{CES}

Under the CES demand assumption, the merger pass-through matrix is defined as $M \equiv - \left( \frac{\partial h (\tilde{p})}{\partial \tilde{p}} \right)^{-1} \bigg \vert_{\tilde{p} = \tilde{p}^\text{pre}}$, where $\tilde{p}_j \equiv \log p_j$, and each $h_j(\tilde{p})$ is normalized to be quasilinear in relative margins $m_j = \frac{p_j - c_j}{p_j}$. As shown in \citet{koh2024merger}, for each $j \in \mathcal{J}_A \cup \mathcal{J}_B$, 
\[
h_j(\tilde{p}) = - \varepsilon_{jj}^{-1} - m_j + (1 + \varepsilon_{jj}^{-1}) \sum_{l \in \mathcal{J}_A \cup \mathcal{J}_B \backslash j} m_l D_{j \to l}^R,
\]
where $\varepsilon_{jj} \equiv \frac{\partial q_j}{\partial p_j} \frac{p_j}{q_j}$ is the own-price elasticity of demand and $D_{j \to l}^R = - \frac{\partial R_l / \partial p_j}{\partial R_j/\partial p_j}$ is the revenue diversion ratio from product $j$ to product $l$. Under the CES demand assumption, the revenue diversion ratio and the own-price elasticity can be simplified as $D_{j \to l}^R = \frac{s_l^R}{1- s_j^R}$ and $\varepsilon_{jj} = (1-s_j^R)(1-\sigma) - 1$.

Let us characterize the matrix of $\frac{\partial h_j(\tilde{p})}{\partial \tilde{p}_k}$ for $j,k \in \mathcal{J}_A \cup \mathcal{J}_B$. First, 
\[
    \frac{\partial h_j}{\partial \tilde{p}_j} = - \left( \frac{\partial \varepsilon_{jj}^{-1}}{\partial \tilde{p}_j} \right) - \frac{\partial m_j}{\partial \tilde{p}_j} + \frac{\partial (1 + \varepsilon_{jj}^{-1})}{\partial \tilde{p}_j} \sum_{l \in \mathcal{J}_A \cup \mathcal{J}_B \backslash j}m_l D_{j \to l}^R + (1 + \varepsilon_{jj}^{-1}) \sum_{l \in \mathcal{J}_A \cup \mathcal{J}_B \backslash j} m_l \frac{\partial D_{j \to l}^R}{\partial \tilde{p}_j} .
\]
\citet{koh2024merger} shows that $\frac{\partial \varepsilon_{jj}^{-1}}{\partial \tilde{p}_j} = \varepsilon_{jj}^{-2} s_j^R(1-s_j^R)(1-\sigma)^2$, $\frac{\partial \varepsilon_{jj}^{-1}}{\partial \tilde{p}_k} = (-1) \varepsilon_{jj}^{-2} s_j^R s_k^R (1-\sigma)^2$, $\frac{\partial m_j}{\partial \tilde{p}_j} = 1 - m_j$, $\frac{\partial D_{j \to k}^R}{\partial \tilde{p}_j} = 0$, and $\frac{\partial D_{j \to k}^R}{\partial \tilde{p}_k} = (1 - \sigma) \left( \frac{s_k^R(1 - s_k^R)}{1-s_j^R} - \frac{(s_k^R)^2 s_j^R}{(1- s_j^R)^2} \right)$. Then,
\[
\frac{\partial h_j(\tilde{p})}{\partial \tilde{p}_j} =  -\varepsilon_{jj}^{-2} s_j^R (1-s_j^R)(1-\sigma)^2 - (1 - m_j) + \varepsilon_{jj}^{-2} s_j^R (1-s_j^R) (1-\sigma)^2 \sum_{l \in \mathcal{J}_A \cup \mathcal{J}_B \backslash j} m_l D_{j \to l}^R.
\]
Next, for $k \neq j$,
\[
\begin{split}
\frac{\partial h_j(\tilde{p})}{\partial \tilde{p}_k} & = - \left( \frac{\partial \varepsilon_{jj}^{-1}}{\partial \tilde{p}_k} \right) + \frac{\partial (1 + \varepsilon_{jj}^{-1})}{\partial \tilde{p}_k} \sum_{l \in \mathcal{J}_A \cup \mathcal{J}_B \backslash j} m_l D_{j \to l}^R + (1 + \varepsilon_{jj}^{-1}) \sum_{l \in \mathcal{J}_A \cup \mathcal{J}_B \backslash j} \left( \frac{\partial m_l}{\partial \tilde{p}_k} D_{j \to l}^R + m_l \frac{\partial D_{j \to l}^R}{\partial \tilde{p}_k} \right)    \\
& =- \left( \frac{\partial \varepsilon_{jj}^{-1}}{\partial \tilde{p}_k} \right) + \frac{\partial (1 + \varepsilon_{jj}^{-1})}{\partial \tilde{p}_k} \sum_{l \in \mathcal{J}_A \cup \mathcal{J}_B \backslash j} m_l D_{j \to l}^R + (1 + \varepsilon_{jj}^{-1}) \left( \frac{\partial m_k}{\partial \tilde{p}_k} D_{j \to k}^R + m_k \frac{\partial D_{j \to k}^R}{\partial \tilde{p}_k} \right)  \\
& + (1 + \varepsilon_{jj}^{-1}) \sum_{l \in \mathcal{J}_A \cup \mathcal{J}_B \backslash \{j,k\}} m_l \frac{\partial D_{j \to l}^R}{\partial \tilde{p}_k} \\
& = \varepsilon_{jj}^{-2} s_j^R s_k^R(1-\sigma)^2 + (-\varepsilon_{jj}^{-2} s_j^R s_k^R (1-\sigma)^2) \sum_{l \in \mathcal{J}_A \cup \mathcal{J}_B \backslash j} m_l \frac{s_l^R}{1- s_j^R} \\
& + (1 + \varepsilon_{jj}^{-1}) \left( (1-m_k) \frac{s_k^R}{1-s_j^R} + m_k (1-\sigma) \left( \frac{s_k^R(1-s_k^R)}{1-s_j^R} - \frac{(s_k^R)^2 s_j^R}{(1-s_j^R)^2} \right) \right) \\
& + (1 + \varepsilon_{jj}^{-1}) \sum_{l \in \mathcal{J}_A \cup \mathcal{J}_B \backslash \{j,k\}} m_l (\sigma-1)\left( \frac{s_l^R s_k^R}{1-s_j^R} + \frac{s_l^R s_j^R s_k^R}{(1-s_j^R)^2} \right)
\end{split}
\]
Note that $\frac{\partial D_{j \to l}^R}{\partial \tilde{p}_k} = (\sigma - 1) \left( s_l^R s_k^R \frac{1}{1 - s_j^R} + s_l^R(1-s_j^R)^{-2} s_j^R s_k^R \right)$ for $l \neq j,k$ and that that the pre-merger relative margin $m_j = \frac{p_j - c_j}{p_j}$ for each product $j$ owned by firm $f$ can be calculated as $m_j = \left( \frac{1}{\sigma}\right) \left( \frac{1}{1 - \frac{\sigma - 1}{\sigma} s_f} \right) = \left( \frac{1}{\sigma} \right) \left( \frac{\frac{\sigma}{\sigma - 1}}{ \frac{\sigma}{\sigma - 1} - s_f} \right) $.\footnote{See the derivation in the Online Appendix A of \citet{caradonna2024mergers}.} Thus, the matrix $\frac{\partial h(\tilde{p}^\text{pre})}{\partial \tilde{p}}$ and thus the merger pass-through matrix can be calculated from the pre-merger shares $(s_j)_{j \in \mathcal{J}_A \cup \mathcal{J}_B}$ and the price responsiveness parameters $\sigma$.

\end{document}